\def\E{{\mathbb E}}
\def\Re{\mathbb{R}}
\def\hat{\widehat}
\def \A{\mathcal{P}}
\def \R{\mathcal{R}}
\def\Re{{\mathbb R}}
\newcommand{\ignore}[1]{{}}
\newcommand{\exclude}[1]{}
\newcommand{\bfx}{\bm{x}} 
\newcommand{\bfX}{\bm{X}} 
\newcommand{\bfp}{\bm{p}} 
\newcommand{\bfP}{\bm{P}} 
\newcommand{\bfv}{\bm{v}}
\newcommand{\bfz}{\bm{z}}
\renewcommand*{\top}{%
  {\mathpalette\@transpose{}}%
}
\newcommand*{\@transpose}[2]{%
  \raisebox{\depth}{$\m@th#1\scriptscriptstyle\mathsf{T}$}%
}
\declaretheorem[name=Proposition]{proposition}
\declaretheorem[name=Lemma]{lemma}
\newcommand*{\QEDA}{\hfill\ensuremath{\square}}
\newcommand*{\QEDB}{\hfill\ensuremath{\diamond}}
\begin{document}

\TITLE{Computing Experiment-Constrained\\  D-Optimal Designs}
\RUNAUTHOR{A. Pillai, G. Ponte, M. Fampa, J. Lee,  M. Singh, W. Xie}
\RUNTITLE{Large-Scale  $D$-optimal Design
}

\ARTICLEAUTHORS{%
\AUTHOR{Aditya Pillai}
	\AFF{H. Milton Stewart School of Industrial and Systems Engineering, Georgia Institute of Technology, Atlanta, GA 30332, \EMAIL{apillai32@gatech.edu.}}
\AUTHOR{Gabriel Ponte}
\AFF{IOE Dept., University of Michigan and Federal University of Rio de Janeiro, \EMAIL{gabponte@umich.edu}}
\AUTHOR{Marcia Fampa}
\AFF{COPPE, Federal University of Rio de Janeiro, \EMAIL{fampa@cos.ufrj.br}}
\AUTHOR{Jon Lee}
\AFF{IOE Dept., University of Michigan, Ann Arbor, MI, \EMAIL{jonxlee@umich.edu}}
\AUTHOR{Mohit Singh}
	\AFF{H. Milton Stewart School of Industrial and Systems Engineering, Georgia Institute of Technology, Atlanta, GA 30332, \EMAIL{mohitsinghr@gmail.com.}}
\AUTHOR{Weijun Xie}
	\AFF{H. Milton Stewart School of Industrial and Systems Engineering, Georgia Institute of Technology, Atlanta, GA 30332, \EMAIL{wxie@gatech.edu.}}
}

\ABSTRACT{In optimal experimental design, the objective is to select a limited set of experiments that maximizes information about unknown model parameters based on factor levels. This work addresses the generalized D-optimal design problem, allowing for nonlinear relationships in factor levels. We develop scalable algorithms suitable for cases where the number of candidate experiments grows exponentially with the factor dimension, focusing on both first- and second-order models under design constraints. Particularly, our approach integrates convex relaxation with pricing-based local search techniques, which can provide upper bounds and performance guarantees. Unlike traditional local search methods, such as the ``Fedorov exchange" and its variants, our method effectively accommodates arbitrary side constraints in the design space. Furthermore, it yields both a feasible solution and an upper bound on the optimal value derived from the convex relaxation. Numerical results highlight the efficiency and scalability of our algorithms, demonstrating superior performance compared to the state-of-the-art commercial software, \texttt{JMP}.
}
% \funding{Mohit Singh and Aditya Pillai were supported in part by NSF CCF-2106444 and NSF CCF-1910423.}

\KEYWORDS{$D$-optimal design; experimental design; convex relaxation; local search; approximation algorithm; determinant maximization} %\HISTORY{}

\maketitle

\section{Introduction.}\label{sec:intro}
Optimal experimental design is a classical problem in statistics. The goal of optimal experimental design is to identify a limited number of experiments to obtain the maximum information about a vector of unknown model parameters in a given model based on levels of independent variables/factors.
In this work, and as is common, we assume that the model is linear in the parameters. However, the model is often nonlinear in the factor levels. 
We focus on one of the most popular optimal design criteria,  D-optimality. In D-optimal experimental design, under i.i.d. Gaussian noise, we aim to identify a set of experiments to
minimize the generalized covariance of the least-squares parameter estimator. In other words, we seek to minimize the volume of the standard confidence ellipsoid for the parameter centered on the least-squares parameter estimate. This is equivalent to maximizing the (logarithm of the) determinant of the Fisher information matrix (see \cite{Draper} for more details; also see \cite{Wald,Kiefer,KW,Puk,Fedorov} for history and more information).

We let integer $d\geq 1$ denote the dimension of factors and let $\bfx:=(x_1,\ldots, x_d)^\top$
represent a vector of factors.
In our most general setting,
we consider a model based on
a list of $p$ distinct monomials $\mathcal{M} := \{m_1(\bfx), \ldots m_p(\bfx)\}$ in the variables. Specifically, 
\[
y \approx 
\sum_{i=1}^p \theta_i m_i(\bfx),
\]
where ``$\approx$'' indicates that the model incorporates 
a zero-mean Gaussian noise. The model is
linear in the parameter vector $\bm\theta\in \Re^p$, and so, by choosing a set of
experiments $\bfX=\{\bfx^1,\ldots,\bfx^k\}$ and their associated observations, $y^1,\ldots, y^k$ respectively, we can calculate the least-squares estimate $\hat{\theta}=\argmin_{\theta} \sum_{j=1}^k \sum_{i=1}^p \left(\theta_i m_i(\bfx^j) -y^j\right)^2$. 
The challenge, of course, is to choose the best experiments, $\bfX=\{\bfx^1,\ldots,\bfx^k\}$. 
Following the convention, we assume that experiments can be repeated.

For there to be a unique least-squares solution to the linear regression problem associated
with a set of experiments (i.e., the model is ``identifiable''
given the experiments),
we need a set of design points of rank $p$. Assuming that the model is identifiable, given a
multi-set of experiments $\bfX $, 
the D-optimality criterion
gives the value 
$\log  \det( \sum_{\bfx\in\bfX } \bfp(
\bfx)  \bfp(\bfx)^\top )$,
which we wish to maximize.

Before continuing, it is useful to settle on some terminology.
If the maximum degree of a monomial in $\mathcal{M}$ is $\ell$, then we say that we have an \emph{$\ell$-th order model}. 
In an $\ell$-th order model, if we have all possible monomials of maximum degree $\ell$, then we say that it is a \emph{full $\ell$-th order model}. 
%A model omitting the monomial $1$, the constant term, %=x_1^0\cdots x_d^0$
%is called \emph{homogeneous}.
%A \emph{full homogeneous model} is the one that omits the monomial 
%$1$,
%=x_1^0\cdots x_d^0$\,, 
%but is otherwise full. 
We refer to models that are not full as \emph{partial}. 
In what follows, we will be mostly interested in
full first-order models (having $d+1$ monomials), full second-order models (having $\binom{d}{2}+d+1$ monomials), and partial second-order models.  

The \emph{design point} associated
with an experiment $\bfx\in\mathbb{R}^d$ is 
simply 
$\bfp(\bfx):=(m_1(\bfx),\ldots,
m_p(\bfx))^\top\in \mathbb{Z}^p$. 
%\WX{can we change one of them ($\bfp(\bfx)$ and the dimension $p$) to avoid the confusion?} \jon{I think that it is better to leave it. $p$ is the number of elements in $\bfp(\bfx)$} 
That is, it is the vector of evaluations of each monomial on the experiment $\bfx$.
We assume that we have 
a set of $L\geq 2$ levels $\{0, 1,\ldots, L-1\}$, that are the possible values of the factors.
So, without additional structural
requirements, we have a universe of $L^d$ possible experiments, one for each $\bfx\in \{0,1\,\ldots, L-1\}^d$.
It is common for one to have $L\geq \ell+1$
for a full $\ell$-th
order model. For example, to capture
a second-order effect 
at least in one variable, we would need at least three levels
(this is the motivation for many second-order designs in the literature 
that use three levels, such as the well-known Box-Behnken design; see \cite[Sec. 5.3.3.6.2]{BoxBehnkenNIST} and \cite{BB1960}).

% Note that, if we have a full second-order model  and $L=2$ levels,  for example,  
% then the set of design points
% has rank $\leq p-d <p$. This
% is because $x_i=x_i^2$ when $x_i\in\{0,1\}$, for $i=1,\ldots,d$. 
% Because of this, with a mild abuse of terminology, 
% a full second-order model
% using $L=2$ levels omits
% univariate quadratic monomials.
  % \MS{The above paragraph is too technical. We can move it later or skip it?}

We note that modeling and algorithms associated with constrained design regions are a
central topic in the design of experiments literature; see \cite{DOE1,DOE2,DOE3} for references. 
Specifically, we consider 
the constraints of form 
$\bm{A}\bfx \leq \bm{b}$, for $\bfx\in \mathcal{Y} $. So, we define our universe of allowable experiments as 
\[
\mathcal{Y}:=\{ \bfx\in \{0,1,\ldots,L-1\}^d ~:~ \bm{A}\bfx \leq \bm{b}\}.
\]
Finally, 
we impose a ``budget restriction''  on the size of $\bfX$,
the chosen multi-set of experiments, to be $k$.

With all of this, we are now ready to formally state our \emph{experiment-constrained D-Optimal design problem}:
\begin{align} \label{CD-Opt}\tag{ECD-Opt}
\phi:= \max \left\{ \log \det\left( \sum_{\bfx\in \mathcal{Y}} \lambda(\bfx)
\bfp(
\bfx)  \bfp(\bfx)^\top \right)
~:~ \sum_{\bfx\in\mathcal{Y}} \lambda(\bfx)=k;~ \lambda(\bfx)\in\mathbb{Z}_+\,, \mbox{ for } \bfx\in \mathcal{Y}
\right\}.  
\end{align}
The solution selects the multi-set of experiments $\bfX$, by simply having $\lambda(\bfx)$ copies of $\bfx$ in $\bfX$.
We can further associate matrices  
$\bfP$ and $\bm{S}$ where $\bfP$ has 
a row $\bfp(\bfx)$ for
each $\bfx\in \bfX $ and matrix $\bm{S}=\bfP^\top \bfP$.
In this way,
we have 
$\bm{S}=\bfP^\top\bfP=\sum_{x\in \bfX} 
\bfp(
\bfx)  \bfp(\bfx)^\top$,
and we can rewrite the objective function as $\log \det(\bm{S})=\log \det(\bfP^\top\bfP)$.

% \mf{the latex  backslash A and  backslash mathcal{X} have the same representation in the pdf. Moreover their use is mixed in the text. We need to change backslash A for another symbol, and verify its use in the text, as well as the use of backslash mathcal{X}.}

In what follows,
we let
 the collection of feasible design points be
 \[
 \A:=\{\bfp(\bfx) \in \Re^p ~:~ \bfx \in \mathcal{Y}\}.
 \]
Therefore, \ref{CD-Opt} is equivalent to
\begin{align} \label{CD-Opt2}
\max \left\{ \log \det\left( \sum_{v\in \A} \lambda(\bfv)
\bfv  \bfv^\top \right)
~:~ \sum_{\bfv\in\A} \lambda(\bfv)=k;~ \lambda(\bfv)\in\mathbb{Z}_+\,, \mbox{ for } \bfv\in \A
\right\}.  
\end{align}
The main computational challenge is that the set of vectors $\A$  consists of exponentially many vectors in the dimension $p$, and in practice, may not be explicitly defined.
% The main computational challenge is that the set of vectors $\A$ is implicitly defined and consists of exponentially many vectors in the dimension $d$. 
% \MS{We have dimensions $d$ and $p$.}
% JON:  the number of elements in $\A$ is exponential only in $d$, not $p$ --- it doesn't grow at all with $p$. For say a \emph{full} $\ell$-th order model, the \emph{length} of each $\bfp(\bfx)$ is exponential in $\ell$, but this doesn't affect the size of  $\A$

As studied in the literature and available software, when the set
of allowable experiments $\mathcal{Y}$ is given
explicitly, as
a list, it is interesting to have linear constraints on the integer variable vector $\bm\lambda$, whereupon 
the associated D-optimal design problem has been referred to as CD-Opt (see \cite{GMESP_arXiv}). In this paper, set $\mathcal{Y}$
is given by a set of
allowable experiment vectors $\bfx$, described via a range of levels for 
each factor
with optionally additional
linear constraints. 
To distinguish it from CD-Opt, we
call our problem \ref{CD-Opt}.

There is a great deal of literature on heuristic
algorithms for
\ref{CD-Opt}, mainly focusing on the case where there are no side constraints $\bm{A}\bfx \leq \bm{b}$, based on the usual greedy and local search ideas;  
see, for example, the references in \cite[Sec. 1]{KoLeeWayne}. 
The popular software \texttt{JMP}
embraces local search ideas
described in \cite{meyer1995coordinate}, see
\cite[Page 125]{jmp2010design} and is capable of handling additional constraints as well as settings where design points are exponential in size. In Section~\ref{sec:experiments}, we give a detailed comparison of our approach to \texttt{JMP}. We remark that \texttt{JMP} does not produce any gaps to optimality and focuses only on heuristic based approaches, in contrast to our setting.
Various branch-and-bound (B{\&}B) based approaches aiming to find exact optimal solutions  
were also investigated in \cite{KoLeeWayne,KoLeeWayne2,Welch,PonteFampaLeeSBPO22,PonteFampaLeeMPB}.  
% \gp{the link for \cite{KoLeeWayne2} is not working}.
% JL: It will be
None of these handle the case in which the 
set of 
feasible design points is very large and not explicitly 
given.
Our work fills this gap and aims to develop
scalable computational techniques and fast approximation algorithms with performance guarantees,
and to solve difficult instances of \ref{CD-Opt}.

\subsection{Our results and contributions.}
In this work, we consider algorithms for cases of \ref{CD-Opt} for which the number of candidate experiments is exponential in the number of factors, $d$. We develop algorithms for \ref{CD-Opt} and study its continuous convex relaxation. For the sake of practicality, we focus on the cases of full first-order models and second-order models. We would like to emphasize that, in contrast to elementary/classical local search methods (known as ``Fedorov exchange'' and its variants), our approach effectively handles arbitrary side constraints $\bm{A}\bfx \leq \bm{b}$. Moreover, our approach provides not only a solution for the problem but also an upper bound for the optimal value based on the convex relaxation of the problem.

\medskip

\noindent \textbf{Pricing-based local search.}
Consider a given feasible solution $\lambda(\bfx)$, for each $\bfx \in \mathcal{Y}$ of 
\ref{CD-Opt}.
We can view it as a multi-set of experiments
$\bfX $, where $\bfx \in \mathcal{Y}$ occurs with multiplicity $\lambda(\bfx)$ in $\bfX $.
A local search step, in our context, exchanges one experiment
$\bfx'\in \bfX $ with an experiment $\bfx \in \mathcal{Y}\setminus \{\bfx'\}$,
in a way that the objective of \ref{CD-Opt} increases. 
The classical ``Fedorov exchange method'' (i.e., local search step), typically applied when there are no or only very simple side constraints, treats 
set $\mathcal{Y}$ as an input list.
To relieve the computational burden, such a local search would
only consider $\bfx \in \bfX\setminus \{\bfx'\}$ such that $|\bfx-\bfx'|$ is a standard unit vector. In contrast, we treat set
$\mathcal{Y}$ as given by constraints, and we formulate the ``pricing problem for local search'', which seeks an improving $\bfx$ using 
mathematical optimization techniques. This is akin to the well-known column-generation method for a wide variety of problems (which is 
why we call it a ``pricing problem'', 
even though we do not use ``prices'' in the sense of dual variables).
Our pricing problem effectively searches over a much larger local neighborhood than the simple local search described above, since it has no limitation on $|\bfx-\bfx'|$.
For the case of an $\ell$-th order model, the pricing problem has a polynomial objective function of degree $2\ell$,
linear constraints, and integer variables. We will see that this is quite tractable for first-order models and for partial second-order models. 

With respect to first-order models, the pricing problem for our local search becomes a 
 maximization program with a convex quadratic objective function. 
%convex maximization program. 
This situation
can be directly handled with, for example, \texttt{Gurobi}; 
or, using standard lifting methods, we
can reformulate it linearly in the case where $L=2$.
If $L>2$, such a linear reformulation is still possible (by the so-called ``binarization''). See \cite{LEE2007} for the use of similar techniques for a completely different problem. 
For second-order models, we  
proceed similarly but with a greater computational burden.

\medskip
\noindent \textbf{Pricing for the continuous relaxation.}
Finally, we also study the continuous relaxation of \ref{CD-Opt} obtained after removing the integrality constraints,
which is a convex program.
In connection with this, we let the optimal value of that be $\phi_R$.
Solving the continuous relaxation 
%\mf{replace "is" by "provides"} is 
provides a useful upper bound to indicate the quality of any feasible integer solution to 
\ref{CD-Opt}. Our techniques are very similar to those for the local search pricing problem. The main difference is that the objective function for the continuous relaxation pricing problem has a different polynomial objective function of degree $2\ell$.
% is simply the objective function of \ref{CD-Opt}. \MS{Not sure that is correct. }
We will refer
then to the ``pricing problem for the continuous relaxation''
to distinguish it from the ``pricing problem for local search''.
The other differences between the two pricing approaches, which are rather significant, are in the computational details. 
In particular,
we have found computational advantages by: (i) including randomly-generated experiments besides the best ones found by solving
the pricing problem, (ii) working with the dual of the continuous relaxation if necessary, and (iii) reducing the number of experiments via a sparsification procedure over the iterations.

\medskip

\noindent \textbf{Theoretical results.} We also theoretically analyze the performance of our pricing-based local search procedure. The polynomial optimization problem encountered at every local search step is, in general, NP-hard, but good approximation algorithms can be developed under certain settings. Building on the results~\cite{madan2019combinatorial}, we provide an approximation algorithm at each local search step to obtain an approximate local search procedure with the following performance guarantee.
%

 % \mf{i think the problem considered several times in the following  should be  $\max_{ \bfv \in \A} \bfv^\top \bm{G} \bfv$ instead of $\max_{ \bfv \in \mathcal{Y}} \bfv^\top \bm{G} \bfv$}
 
\begin{restatable}{theorem}{thmlocal} \label{theorem:approx-local}
% Let $\phi_\mathcal{Y}$ be the optimum value of \ref{CD-Opt}.
% \jon{what are the assumptions here? no side constraints? how many levels $L$? what order $\ell$ for the model? (all of this goes into $\mathcal{Y}$).}\WX{This result holds for any level}
Suppose that there exists a polynomial-time $\frac{1}{\rho}$-approximation algorithm for some $\rho>1$ to the problem $\max_{ \bfv \in \A} \bfv^\top \bm{G} \bfv$ for every positive semidefinite matrix $\bm{G} \succeq 0$.
Then the pricing-based local search  
% \jon{we should be very precise about the input to claim this.}\WX{changed to local search algorithm} 
algorithm presented in Section~\ref{sec:localsearch} returns a 
feasible solution $\lambda(\bfx)$ of 
\ref{CD-Opt} with the associated 
 multi-set of experiments
$\bfX $, where $\bfx \in \mathcal{Y}$ occurs with multiplicity $\lambda(\bfx)$ in $\bfX $, and 
%feasible solution  $\bfX$ to \ref{CD-Opt} 
 associated 
matrix $\bm{S}$ 
% \jon{this clashes with our notation if
% you mean that the rows of $\bfX $ are 
% the points $\bfp(\bfx)$}
satisfying 
\[ 
\det( \bm{S} ) \geq e^{\phi} \left(  \frac{k - p + 1}{k} \frac{p}{p + k (\rho - 1)} \right)^p
\]
where $\phi$ is the optimal objective of the \ref{CD-Opt}.
% \jon{Don't you mean $\bfX ^\top \bfX $? In fact, it would be better to write $\bfP^\top\bfP$ as indicated on page 3, right after \ref{CD-Opt}.}\WX{Yes,  this is correct}\MS{We should update this notation.} \AP{Updated to  $\bm{S}$, not sure if $\bfP^\top\bfP, \bfX^\top\bfX$ would make sense for this result  since the theorem has no assumption that the vectors we choose come from applying the monomials function $p$. The result holds for any arbitrary set of vectors meaning the set $\A$ can be anything}
\end{restatable}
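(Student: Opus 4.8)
The plan is to show that at termination the returned matrix $\bm{S}$ is nearly optimal by controlling, uniformly over the (exponentially large) set $\A$, the leverage scores $\tau(\bfv) := \bfv^\top\bm{S}^{-1}\bfv$, and then converting a trace bound into the claimed determinant bound via AM--GM on generalized eigenvalues. Write $\bm{S}^*$ for a matrix attaining $\phi$, so $\det(\bm{S}^*)=e^\phi$ and $\bm{S}^*=\sum_{\bfv\in\A}\lambda^*(\bfv)\,\bfv\bfv^\top$ with $\sum_\bfv\lambda^*(\bfv)=k$. Let $\bfv_1,\dots,\bfv_k$ be the multiset of points of the returned solution, so $\bm{S}=\sum_{i=1}^k\bfv_i\bfv_i^\top$, and abbreviate $\tau_i:=\tau(\bfv_i)$, $\bm{S}_{-i}:=\bm{S}-\bfv_i\bfv_i^\top$, and $c_i:=\bfv^\top\bm{S}^{-1}\bfv_i$ for a fixed candidate $\bfv$. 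I will assume each $\bm{S}_{-i}$ is nonsingular (equivalently $\tau_i<1$), which holds as long as a removal does not drop the rank below $p$ and can be ensured in the algorithm for $k$ sufficiently large relative to $p$.

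First I would translate approximate local optimality into a per-swap inequality. By the matrix determinant lemma, removing $\bfv_i$ and inserting any $\bfv\in\A$ multiplies the objective by $(1-\tau_i)\bigl(1+\bfv^\top\bm{S}_{-i}^{-1}\bfv\bigr)$, using $\det(\bm{S}_{-i})=(1-\tau_i)\det(\bm{S})$. Hence the pricing subproblem solved when tentatively removing $\bfv_i$ is exactly $\max_{\bfv\in\A}\bfv^\top\bm{G}\bfv$ with $\bm{G}=\bm{S}_{-i}^{-1}\succeq 0$. The assumed $\tfrac1\rho$-approximation returns a point whose value is at least $M_i/\rho$, where $M_i:=\max_{\bfv\in\A}\bfv^\top\bm{S}_{-i}^{-1}\bfv$; since at termination that point yields no improvement, its value is at most $t_i:=\bfv_i^\top\bm{S}_{-i}^{-1}\bfv_i=\tau_i/(1-\tau_i)$, and therefore $M_i\le\rho\,t_i$. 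Consequently, for every $\bfv\in\A$ the swap-improvement factor obeys $(1-\tau_i)\bigl(1+\bfv^\top\bm{S}_{-i}^{-1}\bfv\bigr)-1\le(1-\tau_i)\rho\,t_i-\tau_i=(\rho-1)\tau_i\le\rho-1$. The essential point is that this holds for \emph{all} $\bfv$, including the unknown support points of $\bm{S}^*$: it is precisely the approximation guarantee that lets us pass from the single point the oracle inspects to an inequality valid over all of $\A$.

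Next I would invoke Sherman--Morrison, $\bfv^\top\bm{S}_{-i}^{-1}\bfv=\tau(\bfv)+c_i^2/(1-\tau_i)$, to rewrite the bound as $\tau(\bfv)(1-\tau_i)+c_i^2\le\tau_i+(\rho-1)$, and then sum over $i=1,\dots,k$. Two identities do the work: $\sum_{i=1}^k\tau_i=\Tr(\bm{S}^{-1}\bm{S})=p$, and $\sum_{i=1}^k c_i^2=\bfv^\top\bm{S}^{-1}\bigl(\sum_i\bfv_i\bfv_i^\top\bigr)\bm{S}^{-1}\bfv=\tau(\bfv)$. The sum collapses to $\tau(\bfv)(k-p+1)\le p+k(\rho-1)$, giving the uniform leverage bound $\tau(\bfv)\le C:=\frac{p+k(\rho-1)}{k-p+1}$ for every $\bfv\in\A$.

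Finally, applying this to the optimal support gives $\Tr(\bm{S}^{-1}\bm{S}^*)=\sum_\bfv\lambda^*(\bfv)\,\tau(\bfv)\le Ck$. Writing $\mu_1,\dots,\mu_p$ for the eigenvalues of $\bm{S}^{-1}\bm{S}^*$, we have $\prod_\ell\mu_\ell=\det(\bm{S}^*)/\det(\bm{S})$ and $\sum_\ell\mu_\ell=\Tr(\bm{S}^{-1}\bm{S}^*)$, so AM--GM yields $\det(\bm{S}^*)/\det(\bm{S})\le(Ck/p)^p$; substituting $C$ and rearranging gives exactly $\det(\bm{S})\ge e^\phi\bigl(\tfrac{k-p+1}{k}\cdot\tfrac{p}{p+k(\rho-1)}\bigr)^p$. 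I expect the main obstacle to be the first step rather than the algebra: one must (i) argue that the $\tfrac1\rho$-guarantee for the NP-hard quadratic maximization transfers into a bound simultaneously valid for every candidate in $\A$, and (ii) justify nonsingularity of the $\bm{S}_{-i}$ so that the determinant-lemma and Sherman--Morrison manipulations are legitimate; the summation and the AM--GM finish are then routine.
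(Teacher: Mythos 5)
Your proof is correct, and its engine coincides with the paper's: your per-swap inequality $\tau(\bfv)(1-\tau_i)+c_i^2\le\tau_i+(\rho-1)$ is exactly \Cref{claim:tau} (their $\tau_j-\tau_i\tau_j+\tau_{ij}\tau_{ji}\le\tau_i+\rho-1$, with $c_i^2=\tau_{ij}\tau_{ji}$), and summing it over the $k$ chosen points using $\sum_i\tau_i=p$ and $\sum_i c_i^2=\tau(\bfv)$ to get the uniform leverage bound $\tau(\bfv)\le\frac{p+k(\rho-1)}{k-p+1}$ is \Cref{claim:tau2-f} verbatim. You genuinely diverge at both ends. At the front end, you derive the swap inequality by removing $\bfv_i$ first, via $\bm{S}_{-i}^{-1}$ and Sherman--Morrison, which forces the assumption that every $\bm{S}_{-i}$ is nonsingular; your justification (``$k$ large enough'') is not immediate, and would be circular if argued from the leverage bound you are in the middle of proving. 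The clean repair is the paper's opposite ordering: $\det(\bm{S}-\bfv_i\bfv_i^\top+\bfv\bfv^\top)=\det(\bm{S})\,(1+\tau(\bfv))\bigl(1-\bfv_i^\top(\bm{S}+\bfv\bfv^\top)^{-1}\bfv_i\bigr)$ needs only $\bm{S}\succ0$, and the termination condition still yields $\det(\bm{S}-\bfv_i\bfv_i^\top+\bfv\bfv^\top)\le\rho\det(\bm{S})$ in the singular case through the pseudoinverse pricing matrix $\bm{I}_p-\bm{S}_{-i}^{\dagger}\bm{S}_{-i}$, so your assumption can simply be dropped. At the back end, the paper converts the leverage bound into feasibility of the dual candidate $\bigl((\alpha\bm{S})^{-1},p/k\bigr)$ for the Lagrangian dual \eqref{lagrangian_dual_relaxation} and invokes weak duality, thereby bounding $\det(\bm{S})$ against $e^{\phi_R}\ge e^{\phi}$; you instead bound $\Tr(\bm{S}^{-1}\bm{S}^*)\le k\,\frac{p+k(\rho-1)}{k-p+1}$ and finish with AM--GM on the nonnegative eigenvalues of $\bm{S}^{-1}\bm{S}^*$. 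Your finish is more elementary (no duality or Slater condition), sidesteps the reciprocal slip in the paper's stated choice of $\alpha$ (as written, their $\bar{\bm\Lambda}$ is not dual feasible; $\alpha$ must be inverted), and loses nothing: since you never use integrality of $\lambda^*$, taking $\bm{S}^*$ to be a relaxation optimum gives the stronger bound with $\phi_R$ in place of $\phi$, matching the paper's claim.
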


An interesting case where there exists an approximation algorithm for the sub-problem $\max_{ \bfv \in \A} \bfv^\top \bm{G} \bfv$ for every $\bm{G} \succeq 0$ is when 
% \mf{we should replace $\A$ by $\mathcal{Y}$ on this next expression}\WX{I have replace $\mathcal{Y}$ to be $\A$ to be consistent. That is $\bfx\in \mathcal{Y}$ and $\bfv\in \A$. Hope my understanding is correct}  
$\A =\{ \bfx \in \{0, 1\}^d ~:~ x_1 = 1 \}$. This corresponds to the full first-order model with two levels, where the first monomial is the constant, i.e., $m_1(\bfx):=1$. 
%\jon{we need to normalize these two cases to our notation
%and terminology \emph{(please note that an instance is not specified by $\mathcal{Y}$; rather, 
%$\mathcal{Y}$ arises from how we specify an instance).} (i) For the 
%$\A_1$ case,
%I believe we are talking about 
%a full (non-homogeneous) linear model with two levels
%and no side constraints, correct?
%(ii) For the case of $\A_2$,
%this doesn't fit it our model \ref{CD-Opt}.
%I would prefer to not broaden the model. 
%Can we restate 
%$\A_2$ as
%using levels 0 and 1 instead, and then we can exactly say that 
%$\A_2$ arises from the full homogeneous linear model with two levels and no side constraints.
%}
%\MS{In that case, we can just restrict to $\A_1$. If everyone agrees, I will update the corollary below.} \AP{The reason $\A_2$ is there is since Nesterov gives the approximation algorithm for the $-1, 1$ case. The corollary is proved by first getting a $-1, 1$ solution which gives a guarantee relative to $\phi_{\A_2}$ (via the theorem), transforming the $-1, 1$ solution to a $0, 1$ solution with vectors in $\A_1$, and finally relating $\phi_{\A_1}$ to $\phi_{\A_2}$  }
%\jon{OK, then can we instead take $\A_2:=\{0, 1\}^d$,
%and state our result for that
%(but do the transformation in the proof)?} \AP{Sounds good, will do} 
Indeed, Nesterov \cite{nesterov2011random}, gives a $\rho = \frac{\pi}{2}$ approximation algorithm for the subproblem $\max_{ \bfv \in \A} \bfv^\top G \bfv$. Applying Nesterov's results along with Theorem~\ref{theorem:approx-local}, we obtain the following corollary. 
\begin{restatable}{corollary}{corrNest}\label{corollary:full-linear-guarantee}
 For the full first-order model with two levels and no side constraints (i.e., $\A=\mathcal{Y}$), 
 % i.e. $\A =  \{\bfx \in \{0 ,1 \}^d ~:~ x_1 = 1\}$,
 the pricing-based local search algorithm returns an approximate local optimum $\bfX$
 to \ref{CD-Opt} with associated matrix $\bm{S}$ satisfying 
 \[\det(\bm{S}) \geq e^{\phi}  \left(  \frac{k - p + 1}{k} \frac{p}{p + k (\frac{\pi}{2} - 1)} \right)^p \]
 and additionally, satisfying
 \[\det(\bm{S}) \geq \frac{k^p}{2^{2(p-1)}} \left(  \frac{k - p + 1}{k} \frac{p}{p + k (\frac{\pi}{2} - 1)} \right)^p. \]
 %  where we additionally show that  $e^{\phi_{R}} = \frac{k^p}{2^{2(p-1)}}$. 
 \end{restatable}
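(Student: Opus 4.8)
The plan is to handle the two displayed inequalities separately, because—as I will explain—the second is genuinely \emph{stronger} than the first and cannot be obtained from it merely by substituting a bound for $e^{\phi}$.

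The first inequality is immediate. For the full first-order two-level model we have $\A=\{\bfv\in\{0,1\}^p : v_1=1\}$ with $p=d+1$, and by \cite{nesterov2011random} there is a polynomial-time $\tfrac{1}{\rho}$-approximation with $\rho=\tfrac{\pi}{2}$ for the pricing subproblem $\max_{\bfv\in\A}\bfv^\top\bm{G}\bfv$ over every $\bm{G}\succeq 0$. Feeding $\rho=\tfrac{\pi}{2}$ into Theorem~\ref{theorem:approx-local} yields the first bound with no further work.

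The crux is the second inequality. I would first observe that it is strictly stronger than the first: for any (integral or fractional) design the matrix in $\pm1$ coordinates has all diagonal entries equal to $k$, so its trace is $kp$ and AM--GM gives determinant at most $k^p$; translating back (see below) this shows $e^{\phi}\le\tfrac{k^p}{2^{2(p-1)}}$, so replacing $e^\phi$ by the larger quantity $\tfrac{k^p}{2^{2(p-1)}}$ cannot follow from lower-bounding $e^\phi$. The correct reading is against the convex relaxation: the proof of Theorem~\ref{theorem:approx-local} (following \cite{madan2019combinatorial}) in fact compares the returned $\bm{S}$ to the \emph{continuous} optimum, establishing $\det(\bm{S})\ge e^{\phi_R}\bigl(\tfrac{k-p+1}{k}\tfrac{p}{p+k(\rho-1)}\bigr)^p$; since $\phi\le\phi_R$ this already implies the $e^\phi$ form stated in the theorem, and I would invoke this stronger relaxation-based form here. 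It then remains to show $e^{\phi_R}\ge\tfrac{k^p}{2^{2(p-1)}}$ via an explicit feasible continuous design. I would change variables from levels $\{0,1\}$ to $\{-1,+1\}$ through the affine bijection $\bfx\mapsto 2\bfx-\bm{1}$, acting on design points by the lower-triangular matrix $\bm{T}=\left(\begin{smallmatrix}1 & \bm{0}^\top\\ -\bm{1} & 2\bm{I}\end{smallmatrix}\right)$, which fixes the constant coordinate, maps $\A$ bijectively onto $\{\bfv\in\{\pm1\}^p : v_1=1\}$, and has $\det\bm{T}=2^{d}$. Writing $\bm{S}_x,\bm{S}_z$ for the corresponding moment matrices, one has $\bm{S}_z=\bm{T}\bm{S}_x\bm{T}^\top$, hence $\det\bm{S}_z=2^{2d}\det\bm{S}_x$. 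Placing uniform weight $k/2^{d}$ on each of the $2^{d}$ points and using the symmetry $\sum_{\bfv\in\{\pm1\}^p,\,v_1=1}\bfv\bfv^\top=2^{d}\bm{I}$ gives $\bm{S}_z=k\bm{I}$, so $\det\bm{S}_z=k^p$ and $\det\bm{S}_x=2^{-2d}k^p=\tfrac{k^p}{2^{2(p-1)}}$. As this design is feasible for the relaxation (nonnegative weights summing to $k$), we conclude $e^{\phi_R}\ge\tfrac{k^p}{2^{2(p-1)}}$, and combining with the relaxation form of the guarantee gives the second inequality. The AM--GM upper bound above in fact shows $e^{\phi_R}=\tfrac{k^p}{2^{2(p-1)}}$ exactly, though only the lower bound is needed.

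The main obstacle is conceptual rather than computational: one must recognize that the second inequality does not follow from the first and instead rests on the relaxation-based form of Theorem~\ref{theorem:approx-local} together with the exact evaluation of $\phi_R$. Once the change of variables is set up the determinant computation is routine; the single point demanding care is the direction of the inequalities—namely $\phi\le\phi_R$ and $\bigl(\tfrac{k-p+1}{k}\tfrac{p}{p+k(\rho-1)}\bigr)^p\le1$—so that replacing $e^\phi$ by the larger $e^{\phi_R}=\tfrac{k^p}{2^{2(p-1)}}$ indeed strengthens, rather than weakens, the conclusion.
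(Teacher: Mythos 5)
Your proposal is correct, and its skeleton matches the paper's: both rest on Nesterov's $\frac{2}{\pi}$-approximation fed into Theorem~\ref{theorem:approx-local}, both exploit the fact that the theorem's proof actually bounds $\det(\bm{S})$ against $e^{\phi_R}$ rather than merely $e^{\phi}$, and both identify $\frac{k^p}{2^{2(p-1)}}$ as the value of the uniform design. The execution differs in two places. First, the paper invokes Nesterov only over $\{-1,1\}^d$, runs the local search argument entirely in $\pm 1$ coordinates, and then converts the returned solution to a $\{0,1\}$ design by Hadamard-style operations ($\bm{V}=\bm{L}_{p-1}\cdots\bm{L}_1\bm{V}'\bm{D}$ with $\det(\bm{L}_i)=1/2$), yielding $\det(\bm{S})=\det(\bm{S}')/2^{2(p-1)}$; you instead apply the theorem directly over $\A=\{\bfv\in\{0,1\}^p:v_1=1\}$, which presupposes that Nesterov's guarantee transfers to this set. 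That transfer is true, but it is precisely conjugation by your matrix $\bm{T}$ together with the sign symmetry $\bfw\mapsto-\bfw$ of the $\pm1$ problem, and you should state that reduction explicitly rather than cite \cite{nesterov2011random} for the $0/1$ pricing problem as if it were immediate (the paper's introduction makes the same bare assertion, but its proof of the corollary is structured so as to avoid relying on it). Second, for the relaxation value the paper proves Lemma~\ref{prop:relax-value-2} by exhibiting the uniform primal solution \emph{and} a matching feasible dual $(\bm{\Lambda},\nu)=\bigl((\bm{A}\bm{D}\bm{A}^\top)^{-1},p/k\bigr)$, computing $\det(\bm{A}\bm{D}\bm{A}^\top)$ by block elimination, thereby pinning down $e^{\phi_R}=\frac{k^p}{2^{2(p-1)}}$ exactly; you observe that only the lower bound $e^{\phi_R}\ge\frac{k^p}{2^{2(p-1)}}$ is needed and obtain it more cheaply, since in $\pm1$ coordinates the uniform design gives $\bm{S}_z=k\bm{I}$ by symmetry and $\det\bm{T}=2^{d}$ converts this to the $0/1$ value, with AM--GM on the trace supplying equality as a by-product and no dual certificate required. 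Your opening observation---that the second displayed inequality is genuinely stronger than the first (because $e^{\phi}\le\frac{k^p}{2^{2(p-1)}}$) and therefore must come from the relaxation form of the theorem rather than from the first inequality---is correct, and it makes explicit a logical point the paper leaves largely implicit.
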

Here, the equality for $\phi_{R}$ follows using the symmetry of the convex relaxation.

\subsection{Organization.}
In Section~\ref{sec:localsearch}, we introduce the pricing based local search algorithm and discuss the ideas used in its implementation. Next, in Section~\ref{sec:local-theory}, we analyze the local search theoretically by proving an approximation guarantee given an approximation algorithm for the pricing problem. We also give an efficient approximation algorithm for pricing problem in the first-order model and also show a hardness result for the pricing problem in the second-order model. In Section~\ref{sec:relaxation}, we discuss the convex relaxation and how we implement a pricing based column generation to solve the relaxation. Next, in Section~\ref{sec:experiments}, we discuss numerical results of local search and column generation. 

%{\color{blue} To do.}

\section{Local search algorithm for D-Optimal design.}\label{sec:localsearch}
In this section, we discuss the local search algorithm for \ref{CD-Opt}. 
We are interested in instances of the problem where the set of vectors  $\A:=\{\bfp(\bfx) \in \Re^p ~:~ \bfx \in \mathcal{Y}\}$ is not explicitly given and is too big to enumerate all its elements. We  consider that we have a feasible solution      $\hat{\lambda}(\bfx)$,  for   $\bfx\in \mathcal{Y}$, to \ref{CD-Opt},  but we only store the vectors   $\bfp(\bfx)$ for which $\hat{\lambda}(\bfx)>0$. We let $\mathcal{Y}^{>0}:=\{\bfx\in\mathcal{Y}~:~\hat{\lambda}(\bfx)>0\}$ and therefore, the associated matrix $\bm{S}=\sum_{\bfx\in \mathcal{Y}^{>0}}\hat{\lambda}(\bfx) \bfp(\bfx) \bfp(\bfx)^\top $, and we assume that rank$(\bm{S})=p$. Observe that $|\mathcal{Y}^{>0}|\leq p$ and thus can be explicitly stored. The simplest local search move decreases $\hat{\lambda}(\bfx')$ by one, for some $\bfx'\in\mathcal{Y}^{>0}$, 
and increases  $\hat{\lambda}(\bfx)$ by one, for some $\bfx\in\mathcal{Y}$ and  $\bfx\neq\bfx'$. We can explicitly iterate over the experiments in the set $\mathcal{Y}^{>0}$ to find $\bfx'$. %  (which we could easily do in parallel).
Since we do not explicitly generate and store all the vectors of $\A$, for a given $\bfx'\in\mathcal{Y}^{>0}$, we will seek to select an experiment $\bfx\in\mathcal{Y}$ and generate the corresponding vector $\bfp(\bfx)$ of $\A$, so that $\det(\bm{S}-\bfp(\bfx') {\bfp(\bfx')}^\top + \bfp(\bfx) \bfp(\bfx)^\top)> \det(\bm{S})$, if one exists. 
 
We define $\bm{S}_{\bfx'} := \bm{S} - \bfp(\bfx'){\bfp(\bfx')}^\top$ for all $\bfx'\in\mathcal{Y}^{>0}$ and, for a $p \times p$  matrix $\bm{S}$ and $k<p$, we define $\overset{k}\det(\bm{S})$  as the product of the $k$ largest eigenvalues of $\bm{S}$. 
Then, for all $\bfx'\in\mathcal{Y}^{>0}$ and $\bfx\in\mathcal{Y}$, if  $\mbox{rank}(\bm{S}_{\bfx'}) =p$, we have
\[
  \det( \bm{S}_{\bfx'} + \bfp(\bfx){\bfp(\bfx)}^\top ) = \det( \bm{S}_{\bfx'})(1 + {\bfp(\bfx)}^\top \bm{S}_{\bfx'}^{-1}\bfp(\bfx)).
\]
Otherwise, if $\bm{S}_{\bfx'}$ is rank deficient, we have   
\[  \det( \bm{S}_{\bfx'} + \bfp(\bfx){\bfp(\bfx)}^\top ) =  \overset{p-1}{\det}(\bm{S}_{\bfx'}){\bfp(\bfx)}^\top (\bm{I}_p - \bm{S}_{\bfx'}^{\dag}\bm{S}_{\bfx'} ) \bfp(\bfx), \]
where $\bm{S}_{\bfx'}^{\dag}$ is the Moore-Penrose pseudoinverse of $\bm{S}_{\bfx'}$ ( see \cite{Penrose}).

%\MS{Did we define the pseudo-inverse?}
Finally,  for a given $\bfx'\in\mathcal{Y}^{>0}$, generating $\bfp(\bfx)\in \A$ that  maximizes $\det( \bm{S}_{\bfx'} + \bfp(\bfx){\bfp(\bfx)}^\top )$ is equivalent to the  following pricing problem: %\eqref{pricing_LS} for our local search, where we apply the similar technique to the well-known column generation method,
\begin{equation}\label{pricing_LS} \tag{Sub}\max_{\bfx \in \mathcal{Y}}\bfp(\bfx)^\top\bm{G}\bfp(\bfx), \end{equation} where $
\bm{G}:= \bm{S}_{\bfx'}^{-1}$, if $\mbox{rank}(\bm{S}_{\bfx'}) =p$, and  $\bm{G}:= \bm{I}_p-\bm{S}_{\bfx'}^{\dag}\bm{S}_{\bfx'}$, otherwise.

It is easy to see (considering the $\bfx$ associated with $\bfx'$) that, if $\mbox{rank}(\bm{S}_{\bfx'}) =p$, the optimal objective value of \eqref{pricing_LS}
is at least $\det(\bm{S})/\det(\bm{S}_{\bfx'})-1$, and any objective value above that gives a local move with improving objective-function
value. Otherwise, %the same argument applies, but considering that 
the optimal objective value of \eqref{pricing_LS}
is at least  $\det( \bm{S}) /\overset{p-1}{\det}(\bm{S}_{\bfx'}) - 1$.

In Section~\ref{sec:Pricing}, we discuss how to solve the problem \eqref{pricing_LS}. 

\subsection{Solving the pricing problem for the local search.}\label{sec:Pricing}

For the case of an $\ell$-th order model,  the pricing problem \eqref{pricing_LS} maximizes a polynomial objective function of degree $2\ell$, subject to linear constraints and integrality constraints over the variables. For any order, this is a nonconvex
optimization problem, and a common approach to solving it consists of reformulating it
as an integer linear program by lifting the problem to a higher-dimensional space using McCormick inequalities. In the basic approach, we would instantiate
a new variable $y_i$ for each monomial $m_i(\bfx)$, and we
describe or approximate the convex hull of the graph of 
$y_i=m_i(\bfx)$  on the domain
$\{0,1,\ldots,L-1\}^d$,
using linear inequalities.
This can be done quite effectively.
For example, when $L=2$, we can assume that the degree of each variable in each monomial is 0 or 1. Therefore, we can define that $m_i(\bfx)=
\prod_{j\in \mathcal{I}} x_j$. Then
the inequalities 
$0\leq y_i \leq x_j$ for $j\in \mathcal{I}$
 and
$y_i \geq \sum_{j\in \mathcal{I}}
x_j -|\mathcal{I}| +1$ define the convex hull of the graph of 
$y_i=m_i(\bfx)$.
We note that one way to handle more than two levels is through ``binarization,''
representing each integer variable $0\leq x_i\leq L-1$ with a linear function of $\lceil \log_2 L\rceil$ binary variables (e.g., see \cite{Owen,Buchheim2008}). 
% In any case, there is a lot of  literature on such 
% polynomial-reformulation techniques; 
% see \cite{Buchheim2008} and the references therein. 
%Moreover, this in how integer-optimization solvers like \texttt{Gurobi} and \texttt{Cplex} approach quadratic functions and how global-optimization solvers like \texttt{Baron} and \texttt{SCIP} attack polynomials of arbitrary degree. \hl{WX: Are u sure about this sentence?}

Although our approach can be applied to solve \eqref{pricing_LS} for any $\ell$-th order model, we are particularly interested in solving two special cases-- the full first- and second-order models.

For a full first-order model with $d$ factors, a design point is given by $\bfp(\bfx)=(1,x_1,\ldots,x_d)^\top \in\mathbb{Z}^{d+1}$. This particularizes \eqref{pricing_LS} 
to the quadratic optimization problem with integer variables $\alpha_0,\ldots, \alpha_d$:
\begin{equation}\tag{Quad\_Sub}\label{quad_sub}
\begin{array}{lrl}\displaystyle
&\max_{\bm\alpha}~ &\bm\alpha^\top \bm G\bm\alpha,\\
&\text{s.t.} & \alpha_0 = 1,\\
&&\alpha_i\in \{0,1,\ldots, L-1\},\qquad \forall i=1,\ldots,d,\\
&&\bm{A}(\alpha_1,\ldots,\alpha_d)^\top\leq \bm{b}.
\end{array}
\end{equation}
Note that this is  a  maximization program with convex quadratic objective function.
For the ``unconstrained" case (i.e., without side constraints $\bm{A}(\alpha_1,\ldots,\alpha_d)^\top\leq \bm{b}$), we may relax the integrality constraints, and an optimal solution will be at a vertex of the box constraints. In this case, there are only $2^d$ possible solutions to check, rather than $L^d$ possible integer solutions.
This means that we will only generate experiments using the extreme levels, $0$ and $L-1$. On the other hand, for the first-order model, we have $L=2$, and all levels are extreme.
Therefore, in this unconstrained case, the problem \eqref{quad_sub} is equivalent to a ``boolean quadratic program.''  For modest values of $d$ (e.g., $d = 12$), it is feasible to solve \eqref{quad_sub} by enumeration. However, for larger $d$ (e.g., $d = 20$), enumeration becomes impractical as the solution space grows too large to explore in a reasonable time. To address this, we formulate \eqref{quad_sub} as a linear integer program by linearizing the quadratic terms, allowing us to efficiently solve it using \texttt{Gurobi}. In addition, we provide primal heuristics that speed up the search for the optimal solution for \eqref{quad_sub}.

% {\color{red}For some modest $d$ (e.g., $d=12$),
% we can even solve \ref{quad_sub} by enumeration. But for more serious $d$ 
% (e.g., $d=20$), we could solve the quadratic problem \ref{quad_sub} with say \texttt{Gurobi}, and 
% probably it would be wise to aggressively use the solver's heuristics
% for generating good feasible solutions, or to reformulate it as a linear integer program as mentioned above and then solve it by \texttt{Gurobi} or any other MILP solver. }\MS{The red part is quite vague and can be made precise.}
% \jon{If we want to tie this to what we actually do, probably Aditya is best equipped to write something more precise here.}

For the second-order model, we arrive at a degree four integer optimization problem over integer points in a d-dimensional polytope. We address this by linearizing all higher degree terms in the objective and solving the resulting higher-dimensional linear integer program using \texttt{Gurobi}. 
% {\color{red} which we could approach with  say \texttt{Baron} or \texttt{SCIP}, or, as in general, we could reformulate it as a linear integer program, but now in a higher dimension. } \MS{Again can be made precise.} \jon{If we want to tie this to what we actually do, probably Aditya is best equipped to write something more precise here.}

Finally, we also consider solving \eqref{pricing_LS} heuristically before solving it exactly using a solver. We employ a local search method with an initial solution $\bar\alpha=(1; \bfx')$, where the criterion for the improvement of a given solution is the increase in the value of the objective function of \eqref{pricing_LS} and different neighborhoods are explored, defined by flipping over one bit of the current solution, termed ``Bit Flip," or swapping two bits, termed ``Bit Swap." More specifically, for bit flip, the neighborhoods of a given solution $\bar{\bm{\alpha}}$ with $\bar{\alpha}_0=1$ are  defined  as
\begin{align} \label{eq:heuristic-1} \tag{Bit Flip}
&\mathcal{N}_1(\bar{\bm\alpha}):= \left\{\bm\alpha\in\mathbb{Z}^{d+1}: 
\alpha_0 = 1,~
0\leq \alpha_i\leq L-1, ~  \mbox{ for all }i\in[d], ~ 
A(\alpha_1,\ldots,\alpha_d)^\top\leq b,\right. \nonumber \\[-5pt]
& \qquad
\left.(\alpha_1,\ldots,\alpha_d)^\top-(\bar{\alpha}_1,\ldots,\bar{\alpha}_d)^\top=\pm\bm{e}_i, \mbox{ for some } i\in[d]
\vphantom{\mathbb{Z}^{d+1}}
\right\}, \nonumber
\end{align}
and for bit swap, they are defined as
\begin{align} \label{eq:heuristic-2} \tag{Bit Swap}
&\mathcal{N}_2(\bar{\bm\alpha}):= \left\{\bm\alpha\in\mathbb{Z}^{d+1}: 
\alpha_0 = 1,~
0\leq \alpha_i\leq L-1, ~  \mbox{ for all } i\in[d], ~ 
A(\alpha_1,\ldots,\alpha_d)^\top\leq b,\right. \nonumber \\[-5pt]
& \qquad
\left.(\alpha_1,\ldots,\alpha_d)^\top-(\bar{\alpha}_1,\ldots,\bar{\alpha}_d)^\top=\bm{e}_i - \bm{e}_j, \mbox{ for some } i,j\in[d], i\neq j
\vphantom{\mathbb{Z}^{d+1}}
\right\}, \nonumber
\end{align}
where $\bm{e}_i$ is the $i$-th standard basis.
%\MS{Differente the above from Fedorov exchange.}

In the next section, we provide a theoretical analysis of a polynomial-time implementation for an approximate local search algorithm.

\section{Theoretical results for D-optimal design.} \label{sec:local-theory}
In this section, we prove Theorem~\ref{theorem:approx-local} and Corollary~\ref{corollary:full-linear-guarantee}. The first result gives a guarantee of the output of the local search algorithm if the polynomial optimization problem  \eqref{pricing_LS} described in Section \ref{sec:Pricing} can be solved approximately. We also show that this result can be applied to obtain an approximation guarantee when set  $\mathcal{Y} = \{-1, 1\}^d$. In addition, we prove the hardness for the sub-problem $\max_{ \bfv \in \A} \bfv^\top \bm{G} \bfv$ arising from a second-order problem.  %We show that it is NP-hard to get an approximation better than $1/2$ for the polynomial optimization problem in a special case of the second-order model. 

\subsection{Local search with approximate local optima.}
In this section, we analyze the local search algorithm when it finds an approximate local solution that can be efficiently implemented in polynomial time, as stated in the following Theorem~\ref{theorem:approx-local}. More specifically, in Theorem \ref{theorem:approx-local}, we prove that instead of identifying the best improving move by solving an integer program (IP), an algorithm that yields an approximately improving solution still achieves an approximately locally optimal solution. Besides, we show that for the special case when the set of allowed vectors is $\mathcal{Y}=\{\bfx\in \{0,1\}^d: x_1=1\}$, Nesterov's result~\cite{nesterov2011random} can be used to give an efficient polynomial time algorithm for implementing the improvement step approximately. We also provide a performance guarantee on the approximate locally optimal solution.

We first prove Theorem~\ref{theorem:approx-local}. The proof follows the outline of Madan et al. \cite{madan2019combinatorial} where they analyze the local search algorithm. In Theorem~\ref{theorem:approx-local}, we assume that the improvement step in the local search algorithm can be approximated within a factor of $\frac{1}{\rho}$. %We show that even an approximation algorithm for the improvement step can be used to give a guarantee on the output of the local search algorithm. 

A crucial ingredient in the analysis is the convex relaxation of \eqref{CD-Opt2} and its Lagrangian dual as defined below. 
First, the continuous relaxation of the D-opt design problem \eqref{CD-Opt2} is given by
\begin{align}
\phi_R=\max_{\bm{\lambda}}\left\{\ln\det\left(\sum_{i = 1}^n\lambda_i \bfv^i{\bfv^i}^\top\right): \sum_{i = 1}^n\lambda_i=k,\lambda_i\geq 0,\forall i\in [n] \right\}. \label{eq_d_opt_combinatorial_discrete_exp}
\end{align}
Next, let us consider its equivalent form
\begin{align*}
\phi_R=\max_{\bm{\lambda},\bm Y}\left\{\ln\det\left(\bm Y\right): \sum_{i = 1}^n\lambda_i\bfv^i{\bfv^i}^\top \succeq \bm Y, \sum_{i = 1}^n\lambda_i=k,\lambda_i\geq 0,\forall i\in [n] \right\}, %\label{eq_d_opt_combinatorial_discrete_exp}
\end{align*}
which admits the following Lagrangian dual with dual multiplier $\bm\Lambda$ associated with constraint $\sum_{i = 1}^n\lambda_i\bfv^i{\bfv^i}^\top\succeq \bm Y$ and dual multiplier $\nu$ associated with constraint $\sum_{i = 1}^n\lambda_i=k$:
\begin{align}
\phi_R^{D}=\min_{\bm{\Lambda}\succeq \bm 0,\nu}\left\{k\nu-\ln\det\left(\bm \Lambda\right)-p: {\bm{v}^i}^\top \bm\Lambda\bfv^i\leq\nu,\forall i\in [n] \right\}, \label{lagrangian_dual_relaxation}
\end{align}
whose optimal value is equal to that of the continuous relaxation \eqref{eq_d_opt_combinatorial_discrete_exp} due to the strong duality of the convex program with the Slater condition \cite{ben2001lectures}.

\thmlocal*
\begin{proof} To prove the theorem, we show a stronger inequality by showing the given guarantee using $\phi_R$ instead of $\phi$, which immediately implies the theorem since $\phi_R \geq \phi$. First, we show that the $(\frac{1}{\rho})$-approximation algorithm for solving the problem $\max_{ \bfv \in \A} \bfv^\top \bm{G} \bfv$ can be used to implement the local search algorithm where each local step can be solved approximately within a factor of $1/\rho$ as shown in the following lemma. 
\begin{lemma} \label{claim:local-opt}
Given some $\rho>1$, if there exists a $\frac{1}{\rho}$-approximation algorithm to the problem $\max_{\bfv \in \A}\bfv^\top \bm{G}\bfv$ for all $\bm{G} \succeq 0$, then there is a $\frac{1}{\rho}$-approximation algorithm to the problem $\max_{\bfv^i \in \hat{S}, \bfv^j \in \mathcal{Y} }\det(\bm{S}_{-i} + \bfv^j{\bfv^j}^\top)$.
\end{lemma}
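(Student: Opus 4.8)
The plan is to reduce the joint two-index maximization to a small number of calls to the assumed $\frac{1}{\rho}$-approximation oracle for $\max_{\bfv \in \A}\bfv^\top \bm{G}\bfv$. The key structural fact is that the current support $\hat{S}$ satisfies $|\hat{S}|\leq p$ (as observed earlier, since $\mathrm{rank}(\bm{S})=p$ forces at most $p$ design points to carry positive weight), so I can afford to enumerate the outer index $\bfv^i \in \hat{S}$ explicitly. It therefore suffices to show that, for each fixed $\bfv^i$, the inner problem $\max_{\bfv^j \in \A}\det(\bm{S}_{-i} + \bfv^j{\bfv^j}^\top)$ admits a $\frac{1}{\rho}$-approximation, and then to argue that taking the best of the at most $p$ approximate inner solutions preserves the factor.

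First I would fix $\bfv^i$ and apply the rank-one determinant update recalled just before \eqref{pricing_LS}. If $\mathrm{rank}(\bm{S}_{-i})=p$, then $\det(\bm{S}_{-i} + \bfv^j{\bfv^j}^\top)=\det(\bm{S}_{-i})\bigl(1 + {\bfv^j}^\top \bm{G}\bfv^j\bigr)$ with $\bm{G}:=\bm{S}_{-i}^{-1}$; if $\bm{S}_{-i}$ is rank-deficient, then $\det(\bm{S}_{-i} + \bfv^j{\bfv^j}^\top)=\overset{p-1}{\det}(\bm{S}_{-i})\,{\bfv^j}^\top \bm{G}\bfv^j$ with $\bm{G}:=\bm{I}_p-\bm{S}_{-i}^{\dag}\bm{S}_{-i}$. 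In both cases the factor in front of the quadratic form is a fixed nonnegative constant independent of $\bfv^j$, and $\bm{G}\succeq 0$: the inverse of a positive-definite matrix is positive definite, while $\bm{I}_p-\bm{S}_{-i}^{\dag}\bm{S}_{-i}$ is the orthogonal projector onto $\ker(\bm{S}_{-i})$, hence symmetric idempotent and PSD. So the oracle applies to $\bm{G}$, returning $\tilde{\bfv} \in \A$ with ${\tilde{\bfv}}^\top \bm{G}\tilde{\bfv}\geq \frac{1}{\rho}\max_{\bfv \in \A}\bfv^\top \bm{G}\bfv$.

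In the rank-deficient case the objective is exactly proportional to the quadratic form, so the $\frac{1}{\rho}$-guarantee transfers verbatim. The one place needing a short argument, and what I expect to be the only real obstacle, is the additive ``$1+$'' in the full-rank case: a $\frac{1}{\rho}$-approximation of the quadratic form is not automatically a $\frac{1}{\rho}$-approximation of $1 + {\bfv^j}^\top \bm{G}\bfv^j$. Writing $M:=\max_{\bfv \in \A}\bfv^\top \bm{G}\bfv$, the approximate value is at least $1 + \frac{1}{\rho}M$, and since $\rho>1$ gives $\frac{1}{\rho}\leq 1$ we get $1 + \frac{1}{\rho}M \geq \frac{1}{\rho}(1+M)$; the factor is preserved precisely because the additive constant equals $1$ and dominates $\frac{1}{\rho}$.

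Finally I would assemble the pieces. Let $\tilde{\bfv}^j_i$ denote the oracle output for outer index $\bfv^i$, and let $(\bfv^{i^*},\bfv^{j^*})$ be a true joint optimizer. Selecting the pair maximizing the actual value $\det(\bm{S}_{-i}+\tilde{\bfv}^j_i{\tilde{\bfv}^j_i}^\top)$ over the enumerated outer indices yields a value at least $\det(\bm{S}_{-i^*}+\tilde{\bfv}^j_{i^*}{\tilde{\bfv}^j_{i^*}}^\top)\geq \frac{1}{\rho}\max_{\bfv^j}\det(\bm{S}_{-i^*}+\bfv^j{\bfv^j}^\top)=\frac{1}{\rho}\max_{\bfv^i,\bfv^j}\det(\bm{S}_{-i}+\bfv^j{\bfv^j}^\top)$, which is the claimed $\frac{1}{\rho}$-approximation; note the per-index guarantees hold whether or not different $\bfv^i$ fall into the full-rank or rank-deficient case, since each uses its appropriate formula. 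As the procedure makes at most $p$ oracle calls together with polynomial bookkeeping, it runs in polynomial time whenever the oracle does.
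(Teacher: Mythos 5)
Your proposal is correct and follows essentially the same route as the paper's proof: fix each $\bfv^i\in\hat{S}$, reduce the inner problem via the rank-one determinant update to a quadratic form in $\bm{G}=\bm{S}_{-i}^{-1}$ (or the projector $\bm{I}_p-\bm{S}_{-i}^{\dag}\bm{S}_{-i}$ in the rank-deficient case), invoke the oracle, absorb the additive ``$1+$'' using $1\geq\tfrac{1}{\rho}$, and take the best answer over the at most $p$ outer indices. Your write-up is somewhat more explicit than the paper's (justifying $\bm{G}\succeq 0$, the $|\hat{S}|\leq p$ enumeration bound, and the final maximization over $i$), but the underlying argument is identical.
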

\begin{proof}
 For each $i \in \hat{S}$, if $\bm{S}_{-i}$ has an inverse, then $\det( \bm{S}_{-i} + \bfv^j{\bfv^j}^\top) = \det( \bm{S}_{-i})(1 + {\bfv^j}^\top\bm{S}_{-i}^{-1}\bfv^j)$. Thus, we can use the approximation algorithm to solve $\max_{ \bfv \in \A  }\bfv^\top\bm{S}_{-i}^{-1}\bfv$ to get a vector $\bfv'$ satisfying $\det(\bm{S}_{-i} + \bfv'{\bfv'}^\top) = \det(\bm{S}_{-i})(1 + \bfv'\bm{S}_{-i}^{-1}\bfv') \geq  \det(\bm{S}_{-i})(1 +  \frac{{\bfv^*}^\top\bm{S}_{-i}^{-1}\bfv^*}{\rho}) \geq \frac{\det(\bm{S}_{-i})}{\rho}(1 +  {\bfv^*}^\top\bm{S}_{-i}^{-1}\bfv^*) = \frac{1}{\rho}\max_{j}\det(\bm{S}_{-i} + \bfv^j{\bfv^j}^\top)$. Similarly, if $\bm{S}_{-i}$ does not have an inverse, then we have $\det( \bm{S}_{-i} + \bfv^j{\bfv^j}^\top ) = \det( \bm{S}_{-i}) {\bfv^j}^\top(\bm{I}_p - \bm{S}_{-i}^{\dagger}\bm{S}_{-i} )\bfv^j$ and can apply the approximation algorithm to $\bm{I}_p - \bm{S}_{-i}^{\dagger}\bm{S}_{-i}$. Since we can get a $\frac{1}{\rho}$-approximation for each fixed $i$ we can also get the same approximation when we maximize over $i$. 
 \QEDB
\end{proof}

We show the bound in the theorem by constructing a feasible dual solution $(\bar{\bm{\Lambda}}, \nu)$ to the dual program~\eqref{lagrangian_dual_relaxation}, whose objective value is comparable to the objective value of the output of the approximate local search algorithm.

Let $(\bar{\bm{\Lambda}}, \nu)$ where $\bar{\bm{\Lambda}} = (\alpha \bm{S})^{-1}$ for $\alpha = \left( \frac{p}{k} \frac{k - p + 1}{p + k(\rho - 1)}  \right)$ and $\nu = \frac{p}{k}$ be a candidate dual solution.  This allows us to relate the objective value of $\bm{S}$ to $\phi_{R}$ since 
\[ 
\phi_{R} \leq k\nu - \ln( \det( (\alpha \bm{S})^{-1} )  ) - p. 
\] 

The inequality follows since the right-hand side is the objective value of the dual solution $(\bar{\bm{\Lambda}}, \nu)$ and the dual is a minimization problem. 

We are ready to show the approximation guarantee. Note that 
 \begin{align*}
 \frac{\phi_R}{p} &\leq \ln(\alpha) + \frac{\ln(\det(\bm{S}))}{p} + \frac{k\nu}{p} - 1 
 = \ln\left( \frac{p}{k} \frac{k-p + 1}{p + k(\rho - 1)}\right) + \frac{\ln(\det(\bm{S}))}{p}.
 \end{align*}
 The first equality follows since the objective value of the dual solution $(\bar{\bm{\Lambda}}, \nu)$ is given by $k\nu - \ln( \det(\bar{\bm{\Lambda}})) - p$ and the second equality is from the definition of $\nu$ and $\alpha$.

Thus, it remains to show that the candidate dual $(\bar{\bm{\Lambda}}, \nu)$ is feasible to the dual problem \eqref{lagrangian_dual_relaxation}.
First, observe that $\bar{\bm{\Lambda}} \succeq 0$ since $\bm{S}\succeq 0$ and $\alpha > 0$. We now verify the constraint 
${\bm{v}^i}^\top \bar{\bm{\Lambda}}\bfv^i\leq\nu$ for each $i$ for which we need the following notation.
\begin{enumerate}[(I)]
 \item For $1 \leq i \leq n$, $\tau_i := {\bfv^i}^\top\bm{S}^{-1}\bfv^i$
 \item For $1 \leq i, j \leq n$, $\tau_{ij} :={\bfv^j}^\top\bm{S}^{-1}\bfv^j$.
\end{enumerate}
The next two lemmas are useful to prove the feasibility of the dual solution. 
\begin{lemma} \label{claim:tau}
 For any $i \in I$ and $1 \leq j \leq n$ we have $\tau_j - \tau_i\tau_j + \tau_{ij}\tau_{ji} \leq \tau_i + \rho - 1$.
\end{lemma}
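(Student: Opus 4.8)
The plan is to read the inequality off the (approximate) local optimality of $\bm{S}$ after writing the post-exchange determinant explicitly in terms of $\tau_i,\tau_j,\tau_{ij},\tau_{ji}$. Fix an index $i$ in the support $I=\{i:\lambda_i>0\}$ (so that a copy of $\bfv^i$ may legitimately be removed) and an arbitrary $j\in[n]$. The exchange that deletes one copy of $\bfv^i$ and inserts $\bfv^j$ produces the matrix $\bm{S}_{-i}+\bfv^j{\bfv^j}^\top=\bm{S}-\bfv^i{\bfv^i}^\top+\bfv^j{\bfv^j}^\top$, and the first step is to evaluate its determinant in closed form.

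For this I would apply the matrix determinant lemma to the rank-two update $\bm{S}+U\bm{D}U^\top$ with $U=[\,\bfv^j,\ \bfv^i\,]$ and $\bm{D}=\diag(1,-1)$. Since $\bm{S}$ is invertible, this gives $\det(\bm{S}_{-i}+\bfv^j{\bfv^j}^\top)=\det(\bm{S})\,\det\!\big(\bm{I}_2+\bm{D}\,U^\top\bm{S}^{-1}U\big)$, and the inner $2\times 2$ determinant expands to $(1-\tau_i)(1+\tau_j)+\tau_{ij}\tau_{ji}$. Hence
\[
\det(\bm{S}_{-i}+\bfv^j{\bfv^j}^\top)=\det(\bm{S})\big(1-\tau_i+\tau_j-\tau_i\tau_j+\tau_{ij}\tau_{ji}\big).
\]
A convenient feature of this route is that the identity is purely algebraic and needs only $\bm{S}$ to be nonsingular; in particular it stays valid when $\bm{S}_{-i}$ is rank-deficient (the case $\tau_i=1$, which occurs when $\lambda_i=1$ and removing $\bfv^i$ drops the rank), sparing me a separate pseudoinverse argument.

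The last step converts local optimality into the stated bound. At termination the local search cannot improve using the $\tfrac1\rho$-approximate best exchange supplied by Lemma~\ref{claim:local-opt}; combining ``no improving move was returned'' with the approximation guarantee yields $\max_{i\in I,\,j}\det(\bm{S}_{-i}+\bfv^j{\bfv^j}^\top)\le \rho\,\det(\bm{S})$, so that $\det(\bm{S}_{-i}+\bfv^j{\bfv^j}^\top)\le\rho\det(\bm{S})$ for every such pair $(i,j)$. Substituting the displayed identity, dividing by $\det(\bm{S})>0$, and rearranging gives $\tau_j-\tau_i\tau_j+\tau_{ij}\tau_{ji}\le \tau_i+\rho-1$, which is exactly the claim. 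I expect the only genuinely delicate point to be the justification of the factor $\rho$: one must use that the guarantee of Lemma~\ref{claim:local-opt} compares the returned move against the \emph{best} exchange over all $i\in I$ and all $j$, so the resulting inequality holds simultaneously for every $(i,j)$ rather than only for the maximizing pair.
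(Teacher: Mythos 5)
Your proof is correct, and its skeleton matches the paper's: both start from the approximate-local-optimality inequality $\det(\bm{S}-\bfv^i{\bfv^i}^\top+\bfv^j{\bfv^j}^\top)\le\rho\det(\bm{S})$, expand the left-hand side in terms of $\tau_i,\tau_j,\tau_{ij},\tau_{ji}$, and rearrange. The difference lies only in how that expansion is obtained. The paper performs two sequential rank-one updates, citing the calculations of Lemma 9 of Madan et al.: first $\det(\bm{S}+\bfv^j{\bfv^j}^\top)=\det(\bm{S})(1+\tau_j)$, then a Sherman--Morrison step to evaluate ${\bfv^i}^\top(\bm{S}+\bfv^j{\bfv^j}^\top)^{-1}\bfv^i$, arriving at $(1+\tau_j)\bigl(1-\tau_i+\tfrac{\tau_{ij}\tau_{ji}}{1+\tau_j}\bigr)\le\rho$. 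You instead apply the rank-two matrix determinant lemma once, with $U=[\,\bfv^j,\ \bfv^i\,]$ and $\bm{D}=\diag(1,-1)$, and read off the same quantity as the $2\times 2$ determinant $(1+\tau_j)(1-\tau_i)+\tau_{ij}\tau_{ji}$. The two routes produce the identical intermediate identity, so yours is not a different argument so much as a self-contained derivation of the step the paper outsources to the cited reference; a side benefit you rightly note is that the rank-two identity needs only $\bm{S}$ to be nonsingular, so it covers the case where $\bm{S}_{-i}$ is rank-deficient without any separate pseudoinverse discussion (the paper's route is also unproblematic there, since $\bm{S}+\bfv^j{\bfv^j}^\top$ stays invertible, but you make the uniformity explicit). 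Your closing justification of the factor $\rho$ --- that the approximation guarantee is relative to the \emph{best} exchange, so termination of the approximate local search bounds the maximum over all pairs $(i,j)$, not just the returned one --- is precisely the content the paper compresses into the phrase ``$\bm{S}$ is an approximate locally optimal solution.''
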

%\MS{Put full proof in the appendix.}
\begin{proof}
 Since $\bm{S}$ is an approximate locally optimal solution, we have that $\det(\bm{S} - \bfv^i{\bfv^i}^\top + \bfv^j{\bfv^j}^\top ) \leq \rho \det(\bm{S})$. Following exactly the calculations of lemma 9 in \cite{madan2019combinatorial}, we get that 
 $\rho \det(\bm{S}) \geq \det(\bm{S})(1 + {\bfv^j}^\top\bm{S}^{-1}\bfv^j)(1 - {\bfv^i}^\top(\bm{S} +  \bfv^j{\bfv^j}^\top)^{-1}\bfv^i)$ which is equivalent to $(1 + \tau_j)(1 - \tau_i + \frac{\tau_{ij}\tau_{ji}}{1 + \tau_j}) \leq \rho$. This concludes the proof since the last inequality is equivalent to the inequality of the claim.  
 \QEDB
\end{proof}

\begin{lemma} \label{claim:tau2-f}
 For any $1 \leq j \leq n$, we have $\tau_j \leq \frac{p + k(\rho - 1)}{k-p + 1}$.
\end{lemma}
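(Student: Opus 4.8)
The plan is to aggregate the inequality of Lemma~\ref{claim:tau} across the support of the current solution, weighting each term by its multiplicity $\lambda_i$. Write $I=\{i:\lambda_i>0\}$ for the support, so that $\bm{S}=\sum_{i\in I}\lambda_i\bfv^i{\bfv^i}^\top$, $\sum_{i\in I}\lambda_i=k$, and Lemma~\ref{claim:tau} is available for every $i\in I$ (which is exactly the set of indices for which the swap-out move is defined). Fixing $j$ and summing the inequality $\tau_j-\tau_i\tau_j+\tau_{ij}\tau_{ji}\leq \tau_i+\rho-1$ over $i\in I$ with weights $\lambda_i$ turns each of the four terms into a quantity that can be evaluated in closed form.

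First I would record the trace identities that make three of the terms collapse. Since $\sum_{i\in I}\lambda_i\bfv^i{\bfv^i}^\top=\bm{S}$, pairing against $\bm{S}^{-1}$ gives $\sum_{i\in I}\lambda_i\tau_i=\Tr(\bm{S}^{-1}\bm{S})=p$, while $\sum_{i\in I}\lambda_i=k$ handles the remaining terms. Hence the weighted sum produces $\sum_{i\in I}\lambda_i\tau_j=k\tau_j$, $\sum_{i\in I}\lambda_i\tau_i\tau_j=p\tau_j$, and $\sum_{i\in I}\lambda_i(\tau_i+\rho-1)=p+k(\rho-1)$.

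The step I expect to be the crux is evaluating the cross term $\sum_{i\in I}\lambda_i\tau_{ij}\tau_{ji}$, where $\tau_{ij}$ is to be read as the genuine off-diagonal quantity $\tau_{ij}={\bfv^i}^\top\bm{S}^{-1}\bfv^j$. Using the symmetry of $\bm{S}^{-1}$, each summand equals $({\bfv^j})^\top\bm{S}^{-1}\bfv^i({\bfv^i})^\top\bm{S}^{-1}\bfv^j$, so the weighted sum factors as $({\bfv^j})^\top\bm{S}^{-1}\big(\sum_{i\in I}\lambda_i\bfv^i{\bfv^i}^\top\big)\bm{S}^{-1}\bfv^j=({\bfv^j})^\top\bm{S}^{-1}\bm{S}\,\bm{S}^{-1}\bfv^j=({\bfv^j})^\top\bm{S}^{-1}\bfv^j=\tau_j$. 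This is precisely where the structure of $\bm{S}$ as the $\lambda$-weighted outer-product sum is exploited, and it is the only place in the argument requiring more than bookkeeping.

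Finally I would assemble the pieces. The aggregated inequality reads $k\tau_j-p\tau_j+\tau_j\leq p+k(\rho-1)$, i.e. $(k-p+1)\tau_j\leq p+k(\rho-1)$. Since $\mbox{rank}(\bm{S})=p$ with $k\geq p$ forces $k-p+1>0$, dividing through yields the claimed bound $\tau_j\leq \frac{p+k(\rho-1)}{k-p+1}$ for every $j$. The one subtlety worth flagging is that Lemma~\ref{claim:tau} only holds for $i$ in the support, but this is automatically respected by the $\lambda_i$-weighting since the weights vanish off $I$.
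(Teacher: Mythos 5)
Your proof is correct and follows essentially the same route as the paper: both sum the per-exchange inequality of Lemma~\ref{claim:tau} over the chosen experiments, collapse the terms via $\sum_i \lambda_i \tau_i = \Tr(\bm{S}^{-1}\bm{S}) = p$, $\sum_i \lambda_i = k$, and the factorization $\sum_i \lambda_i \tau_{ij}\tau_{ji} = {\bfv^j}^\top\bm{S}^{-1}\bm{S}\bm{S}^{-1}\bfv^j = \tau_j$, and then divide by $k-p+1>0$. Your $\lambda_i$-weighted formulation is just the explicit multiplicity-aware version of the paper's sum over the multi-set (and correctly reads $\tau_{ij}$ as ${\bfv^i}^\top\bm{S}^{-1}\bfv^j$, fixing an evident typo in the paper's definition), so the two arguments are the same in substance.
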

\begin{proof}
 Summing the inequality for all $i \in I$ from Lemma \ref{claim:tau}, we have
 \begin{align*}
  \sum_{i \in \hat{S}} (\tau_j - \tau_i \tau_j + \tau_{ij}\tau_{ji}) &\leq \sum_{i \in \hat{S}}\tau_i + k(\rho-1)  = p + k(\rho -1).
 \end{align*}
 We also have the following expression for the left-hand side: $\sum_{i \in I} (\tau_j - \tau_i \tau_j + \tau_{ij}\tau_{ji}) = k\tau_j - p \tau_j + \tau_j$. Combining this with the upper bound concludes the proof.
 \QEDB
\end{proof}
To show ${\bm{v}^i}^\top \bar{\bm{\Lambda}}\bfv^i\leq\nu$ for all $i \in [n]$ , it is equivalent to show $\max_{i = 1}^n {\bm{v}^i}^\top \bar{\bm{\Lambda}}\bfv^i \leq \nu = \frac{p}{k}$. We have, 
\begin{align*}
    \max_{i = j}^n {\bm{v}^j}^\top \bar{\bm{\Lambda}}\bfv^i &= \frac{1}{\alpha} \max_{j = 1}^n\tau_j \\
    & \leq \frac{p}{k}.
\end{align*}
The first equality follows by the definition of $\tau_j$, and the inequality follows by applying the upper bound in Lemma~\ref{claim:tau2-f}. 
\QEDA
\end{proof}

We note that as in Madan et al. ~\cite{madan2019combinatorial}, the guarantee given in the above lemma is more meaningful when $k \gg p$. Now we apply Theorem \ref{theorem:approx-local} to prove Corollary~\ref{corollary:full-linear-guarantee}. 

\corrNest*
\begin{proof}
We prove the corollary by using Theorem~\ref{theorem:approx-local} when $\mathcal{Y}' = \{ -1, 1\}^d$ and then transforming the $\{-1, 1\}$ solution to a $\{0, 1\}$ solution. Nesterov's algorithm gives a $\frac{2}{\pi}$-approximation to the problem $\max_{\bfx \in \mathcal{Y}'} \bfx^\top \bm{G} \bfx$ for any $\bm{G} \succeq 0$ with $\bm{G} \in \mathbb{R}^{d \times d}$. This satisfies the requirements of Theorem~\ref{theorem:approx-local} and allows us to apply it with $\rho = \frac{2}{\pi}$. Let $\phi'$  be the optimal ($\ln \det$) value of the integral solution. Then, from Theorem~\ref{theorem:approx-local}, we get a solution $\bm{S}'$ where $\bm{S} = \bm{V}'^{\top}\bm{V}'$ for $\bm{V}' \in \{-1, 1\}^{p \times k}$ satisfying 
\begin{align}
    \det(\bm{S}') \geq e^{\phi'}  \left(  \frac{k - p + 1}{k} \frac{p}{p + k (\frac{\pi}{2} - 1)} \right)^p \label{eq:appx_1}
\end{align}  

We will show that the $\{-1 ,1\}$ solution $\bm{S}'$ can be transformed to a $\{0, 1\}$ solution $\bm{S}$ such that $\det(\bm{S}) = \frac{\det(\bm{S}')}{2^{2(p - 1)}}$. This transformation and the given guarantee immediately proves the first inequality of the corollary, since it also implies that $\phi = \frac{\phi'}{2^{2(p-1)}}$.  Finally, we show that $\phi_R = \frac{\phi_R'}{2^{2(p - 1)}}$ where $\phi_{R}'$ is the optimal value of the convex relaxation for the set $\mathcal{Y}'$ which allows us to get a stronger version of the first inequality of the corollary where $\phi$ is replaced with $\phi_R$.  The transformation follows via a series of matrix operations that are standard for the Hadamard problem.

We will transform the matrix $\bm{V}'$ to a matrix $\bm{V} \in \{0, 1\}^{p \times k}$ such that the first row is all-one. We can do this with the following steps: 
 \begin{enumerate}[{Step} 1.]
     \item Multiply the columns by  of $\bm{V}'$ by $-1, 1$ so that the first row of $\bm{V}'$ is all ones.
     \item Add the first row of $\bm{V}'$ to all other rows.
     \item Divide all rows except the first row by $2$.
 \end{enumerate}
 We note that the first step can be achieved by multiplying by a ($k \times k$) diagonal matrix $\bm{D}$ on the right, and that steps 2 and 3 can be achieved by multiplying $\bm{V}'$ on the left by $p-1$ matrices $\bm{L}_1, \ldots, \bm{L}_{p-1}$. Thus, we have $\bm{V}  = \bm{L}_{p-1}\ldots \bm{L}_1 \bm{V}' \bm{D}$ so that $\det(\bm{L}_i) = 1/2$ for all $i = 1, \ldots, p-1$ . Thus, we can get a solution with vectors in $\mathcal{Y}=\{0,1\}^d$ by setting $\bm{S} = \bm{V} \bm{V}^\top$ with objective value give by 
 \begin{align*}
     \det(\bm{S}) &= \det \left( \bm{L}_{p-1}\bm{L}_2 \ldots \bm{L}_1 \bm{V}' \bm{D}^2  \bm{V}'^\top \bm{L}_1^\top \ldots \bm{L}_{p-1}^\top   \right) \\
     &= \det \left( \bm{L}_{d-1}\bm{L}_2 \ldots \bm{L}_1 \bm{V}'\bm{V}'^\top \bm{L}_1^\top \ldots \bm{L}_{p-1}^\top   \right) \\
     &= \frac{\det( \bm{S}') }{2^{2(p-1)}}. 
 \end{align*}

Then the proof of the corollary is concluded by the following lemma. 
\begin{lemma} \label{prop:relax-value-2}
    When $\mathcal{Y} = \{\bfx \in  \{0, 1\}^d : x_1 = 1 \}$ , an optimal solution to the continuous relaxation \eqref{eq_d_opt_combinatorial_discrete_exp} is $\lambda_i^*=k/2^{p-1}$ for each $i\in [2^{p-1}]$ with optimal value $e^{\phi_{R}}  = \frac{k^p}{2^{2(p-1)}}$.
 \end{lemma}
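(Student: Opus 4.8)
The statement has two parts: that the uniform weighting is optimal, and that its objective value is $k^p/2^{2(p-1)}$. The plan is to derive optimality from a symmetry/averaging argument that exploits the concavity of $\ln\det$, and then to evaluate the determinant at the uniform point by reusing the coordinate-change already set up in the proof of Corollary~\ref{corollary:full-linear-guarantee}.

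First I would fix notation: identify the $n=2^{p-1}$ design points with the vectors $\bfv\in\{0,1\}^p$ having $v_1=1$ (these include $\bm e_1$ and $\bm e_1+\bm e_j$, hence span $\Re^p$), and write $f(\bm\lambda)=\ln\det\big(\sum_i\lambda_i\bfv^i{\bfv^i}^\top\big)$, a concave function on the feasible set $\{\bm\lambda\ge 0:\sum_i\lambda_i=k\}$ of \eqref{eq_d_opt_combinatorial_discrete_exp}. The symmetry I would use is coordinate complementation: for each $t\in\{2,\dots,p\}$ let $\bm{M}_t$ be the linear map replacing $v_t$ by $v_1-v_t$ and fixing every other coordinate. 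Since $v_1=1$ on all design points, $\bm{M}_t$ sends $v_t\mapsto 1-v_t$ and therefore permutes the design-point set bijectively; moreover $\bm{M}_t$ is lower triangular with diagonal entries $\pm 1$, so $|\det\bm{M}_t|=1$. Writing $g\cdot\bm\lambda$ for the relabeling of the weights induced by this permutation, one checks $\sum_i(g\cdot\lambda)_i\,\bfv^i{\bfv^i}^\top=\bm{M}_t\big(\sum_i\lambda_i\bfv^i{\bfv^i}^\top\big)\bm{M}_t^\top$, so $f(g\cdot\bm\lambda)=f(\bm\lambda)$. The group $G$ generated by $\bm{M}_2,\dots,\bm{M}_p$ acts transitively on the design points, because any two $\{0,1\}^p$-vectors with first coordinate $1$ differ by complementing a subset of coordinates $2,\dots,p$.

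Given an optimal $\bm\lambda^\star$, I would average its $G$-orbit: $\bar{\bm\lambda}=\frac{1}{|G|}\sum_{g\in G}g\cdot\bm\lambda^\star$ is feasible, and by concavity $f(\bar{\bm\lambda})\ge\frac{1}{|G|}\sum_{g}f(g\cdot\bm\lambda^\star)=f(\bm\lambda^\star)=\phi_R$, so $\bar{\bm\lambda}$ is optimal and $G$-invariant. Transitivity then forces $\bar\lambda_i$ to be constant, i.e. $\bar\lambda_i=k/n=k/2^{p-1}$, which proves that the uniform weighting is optimal. To evaluate $e^{\phi_R}=\det\big(\tfrac{k}{2^{p-1}}\sum_{v_1=1}\bfv\bfv^\top\big)$, I would apply the linear map $w_i=2v_i-1$ (for $i\ge 2$, with $w_1=v_1$) used in the proof of Corollary~\ref{corollary:full-linear-guarantee}; it carries the uniform $\{0,1\}$ solution to the uniform solution over $\{-1,1\}^p$ with $w_1=1$, at the cost of the determinant factor $2^{2(p-1)}$ recorded there. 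For the $\{-1,1\}$ points one has $\sum_{w_1=1}\bfw\bfw^\top=2^{p-1}\bm{I}_p$, since every off-diagonal sum vanishes by orthogonality of the coordinates; hence the uniform weighting yields $\tfrac{k}{2^{p-1}}\cdot 2^{p-1}\bm{I}_p=k\bm{I}_p$ with determinant $k^p$. Dividing by $2^{2(p-1)}$ gives $e^{\phi_R}=k^p/2^{2(p-1)}$, as claimed.

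The determinant bookkeeping is routine; the step needing the most care is the symmetry argument — specifically verifying that complementation is genuinely linear on the design points (which uses $v_1=1$ in an essential way) and unimodular up to sign, so that orbit-averaging of $\bm\lambda$ corresponds to a determinant-preserving congruence $\bm{S}\mapsto\bm{M}_t\bm{S}\bm{M}_t^\top$. As a safeguard, I would be prepared to confirm optimality directly through the first-order (KKT) stationarity condition for \eqref{eq_d_opt_combinatorial_discrete_exp}, namely ${\bfv^i}^\top\bm{S}^{-1}\bfv^i=p/k$ for every design point: under the uniform weighting $\bm{S}$ is $G$-invariant, so these quadratic forms are equal across the orbit and their $\lambda$-weighted sum equals $\Tr(\bm{S}^{-1}\bm{S})=p$, forcing the common value $p/k$ and certifying optimality independently of the averaging argument.
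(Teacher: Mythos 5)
Your proposal is correct, but it proves optimality by a different mechanism than the paper. The paper's proof is a primal--dual certificate: it takes the uniform weights, computes their value directly from the block structure $\bm{A}\bm{D}\bm{A}^\top = k\left[\begin{smallmatrix}1 & \bm{1}^\top/2\\ \bm{1}/2 & (\bm{J}_{p-1}+\bm{I}_{p-1})/4\end{smallmatrix}\right]$ via a Schur-complement determinant, then exhibits the dual solution $\bm{\Lambda}=(\bm{A}\bm{D}\bm{A}^\top)^{-1}$, $\nu=p/k$, verifies feasibility by writing out the explicit block inverse $\bm{\Lambda}=\frac{1}{k}\left[\begin{smallmatrix}p & -2\bm{1}^\top\\ -2\bm{1} & 4\bm{I}_{p-1}\end{smallmatrix}\right]$ and checking $\bfv^\top\bm{\Lambda}\bfv=p/k$ on all design points, and concludes by duality. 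You instead get optimality of the uniform weights abstractly, by orbit-averaging under the coordinate-complementation group (using concavity of $\ln\det$ and the fact that each complementation is a unimodular congruence $\bm{S}\mapsto\bm{M}_t\bm{S}\bm{M}_t^\top$, which indeed requires $v_1=1$ to be linear), and you compute the value by passing to $\{-1,1\}$ coordinates where $\sum_{\bfw}\bfw\bfw^\top=2^{p-1}\bm{I}_p$, rather than by the block determinant. Both computations are sound. What each buys: your symmetrization explains \emph{why} the uniform solution is optimal and extends to any design space with a transitive unimodular symmetry group, without having to guess a dual certificate; the paper's route produces the explicit dual optimum $(\bm{\Lambda},\nu)$, which is of a piece with the dual constructions used elsewhere (Theorem~\ref{theorem:approx-local}, Proposition~\ref{proposition:ip-upper-bound}) and needs no group-action bookkeeping. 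Note also that your KKT ``safeguard'' --- checking $\bfv^\top\bm{S}^{-1}\bfv=p/k$ via invariance and $\Tr(\bm{S}^{-1}\bm{S})=p$ --- is essentially the paper's dual-feasibility verification in disguise, just derived by symmetry instead of explicit inversion; so on its own it would already reproduce the paper's argument.
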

 \begin{proof}
 The proof is to construct primal and dual solutions that have the same objective value. We begin by constructing the primal solution. We define the primal solution by letting $\lambda_i = \frac{k}{2^{p-1}}$ for $i = 1, \ldots 2^{p-1}$. Let $\bm{A}$ be the matrix with all $2^{p-1}$ vectors from $\A$ as its columns and $\bm{D}$ be a diagonal matrix where $D_{ii} =  \lambda_i$ for $i \in [2^{p-1}]$. Then the objective value of the primal solution is given by $\ln (\det (  \bm{A} \bm{D} \bm{A}^\top ) )$. We now calculate $\det (  \bm{A} \bm{D} \bm{A}^\top )$. We observe that $\bm{A} \bm{D} \bm{A}^\top = k \begin{bmatrix}
1 & 
\bm{1}^\top /2\\
\bm{1}/2  & (\bm{J}_{p - 1} + \bm{I}_{p-1})/4
\end{bmatrix}$   where $\bm{1}$ is the all ones vector in dimension $p-1$ and $\bm{J}_{p - 1}$ is the all ones matrix in dimension $p-1$. Then we have that $\det(\bm{A} \bm{D} \bm{A}^\top) = k^p \det((\bm{J}_{p - 1} + \bm{I}_{p-1})/4 - \bm{1} (\bm{1}^\top)/4 ) = k^p \det( \bm{I}_{p-1} /4) = \frac{k^p}{2^{2(p-1)}} $. Now we construct a feasible dual solution with the same objective value. We set $\bm{\Lambda} = (\bm{A} \bm{D} \bm{A}^\top)^{-1} $ and $\nu = p/k$. Then the objective value of the dual solution, $k\nu - \ln(\det(\bm{\Lambda})) - p$,  is equal to the primal objective by our calculation of $\det( \bm{A} \bm{D} \bm{A}^\top )$. We now verify the feasibility of the dual solution. The matrix $\bm{\Lambda}$ is PSD since its inverse is PSD, so it remains to show that  $\bfv^\top \bm{\Lambda} \bfv \leq \nu = p/k$ for all $\bfv \in \A$. We can use the block structure of $\bm{A} \bm{D} \bm{A}^\top$ to get that $\bm{\Lambda} =  \frac{1}{k} \begin{bmatrix}
p & 
-2\bm{1}^\top\\
-2\bm{1} &  4\bm{I}_{p-1}
\end{bmatrix}$. Then it is easy to verify that   $\bfv^\top \bm{\Lambda}\bfv = p/k = \nu$ for all $\bfv \in \A$.
     \QEDB
 \end{proof}
\QEDA
\end{proof}

\subsection{Hardness of polynomial optimization problem.}
So far, we have discussed two approaches to solve the problem $\max_{\bfv \in \A} \bfv^\top \bm{G}\bfv$ for $\bm{G} \succeq 0$: in Section \ref{sec:Pricing}, we describe how to solve the problem using an integer programming approach exactly whose running time can be exponential in the input size. In Corollary~\ref{corollary:full-linear-guarantee}, we show that for the special case of a full first-order model when $\A=\mathcal{Y}=\{\bfx\in \{0,1\}^d: x_1=1\}$, we have an efficient $\frac{2}{\pi}$-approximation algorithm. This algorithm is based on the $\frac{2}{\pi}$-approximation by Nesterov~\cite{nesterov2011random}  to solve the problem $\max_{\bfv \in \A} \bfv^\top \bm{G}\bfv$ for $\bm{G} \succeq 0$ where $\A=\mathcal{Y}=\{-1,1\}^d$. A natural question is whether such approximation algorithms also hold for general $\A$, in particular for second-order models. 

In this section, we show a negative result and prove that it is NP-hard to get an approximation better than $1/2$ when set $\A$ is defined by a special case of the second-order model. In this case, hardness even follows when the matrix $\bm{G}$ is diagonally dominant and $\mathcal{M}$ has all degree-one monomials and a limited number of degree-two monomials. %We recall that the problem we solve in Section \ref{sec:Pricing} only has the guarantee that $\bm{G} \succeq 0$ and here we show that even in the special of a diagonally-dominant matrix there is a hardness upper bound of $1/2$.  
The diagonally dominant matrix is of special interest, since Goemans and Williamson \cite{GW95} show that their algorithm for max-cut also achieves a $0.878$-approximation for the problem $\max_{\bfx \in \{-1, 1\}^d}\bfx^\top \bm{G}\bfx$ that was the basis for the algorithm by Nesterov~\cite{nesterov2011random} for the general positive semi-definite matrix $\bm G$. For a diagonally dominant matrix $\bm{G}$ our goal is to solve the problem 
\begin{align} \label{eq:diag-dominant}
 \max_{\bfv \in \A} \bfv^\top \bm{G} \bfv=\max_{\bfx \in \{-1, 1\}^d} \bfp(\bfx)^\top \bm{G} \bfp(\bfx).
\end{align}
where $\bfp(\bfx)$ has coordinates corresponding to all monomials of degree 1 and some monomials of degree 2.
%By showing the hardness of approximating \eqref{eq:diag-dominant} we get the following lemma as a consequence. 
%\begin{lemma}\label{lemma:hard}
 %The polynomial optimization problem \eqref{pricing_LS} is NP-Hard to approximate better than a factor of $\frac{1}{2}$. 
%\end{lemma}
 
We use the problem $\mathrm{MAX-3LIN}$ for a reduction. In $\mathrm{MAX-3LIN}$ we are given $m$ equations over variables $z_1, \ldots, z_p$ of the form $z_i + z_j + z_k \equiv 0 \mod 2$ or   $z_i + z_j + z_k \equiv 1 \mod 2$ and we need to select values of $z_i \in \{0, 1\}$ for each $i\in [p]$ to maximize the number of equations that are satisfied. The instance \emph{cannot} contains two distinct equations involving the same set of variables. It has been shown that $\mathrm{MAX-3LIN}$ is  NP-hard to approximate better than $1/2$ \cite{haastad2001some}.
\begin{proposition}
 There exists an approximation preserving polynomial-time reduction from $\mathrm{MAX-3LIN}$ to Problem \eqref{eq:diag-dominant}, and thus, it is NP-hard to approximate it better than a factor of $\frac12$.
\end{proposition}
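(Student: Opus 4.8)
The plan is to reduce $\mathrm{MAX\text{-}3LIN}$ to \eqref{eq:diag-dominant} by passing to the $\{-1,1\}$ encoding of Boolean variables and exploiting the fact that, on the hypercube, every monomial squares to one. Given a $\mathrm{MAX\text{-}3LIN}$ instance on variables $z_1,\dots,z_p$ with equations $z_{i_t}+z_{j_t}+z_{k_t}\equiv b_t \pmod 2$ for $t\in[m]$, I would set $d:=p$, identify each $z_i$ with $x_i=(-1)^{z_i}\in\{-1,1\}$, and write $s_t:=(-1)^{b_t}$. Then equation $t$ is satisfied exactly when $x_{i_t}x_{j_t}x_{k_t}=s_t$, so the number of satisfied equations is $N(\bfx)=\tfrac m2+\tfrac12\sum_{t} s_t\, x_{i_t}x_{j_t}x_{k_t}$, and maximizing $N$ is equivalent to maximizing $\sum_t s_t x_{i_t}x_{j_t}x_{k_t}$.

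For the design points I would take $\bfp(\bfx)$ to consist of all $p$ degree-one monomials $x_1,\dots,x_p$ together with, for each equation, the degree-two monomial $x_{i_t}x_{j_t}$ (each distinct pair included once); this is the promised ``limited number'' of degree-two coordinates. To generate the cubic term $x_{i_t}x_{j_t}x_{k_t}$ I pair the degree-two coordinate $w_t:=x_{i_t}x_{j_t}$ with the degree-one coordinate $x_{k_t}$, setting the symmetric off-diagonal entries $G_{w_t,k_t}=G_{k_t,w_t}=s_t/2$ and leaving all other off-diagonal entries zero. Because each coordinate of $\bfp(\bfx)$ is a monomial taking values in $\{-1,1\}$ on the hypercube, every coordinate squares to one, so the diagonal of $\bm G$ contributes only the constant $\mathrm{Tr}(\bm G)$, while the off-diagonal part contributes exactly $\sum_t s_t x_{i_t}x_{j_t}x_{k_t}$. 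I would then fix the diagonal tightly, $G_{aa}:=\sum_{b\ne a}|G_{ab}|$, which makes $\bm G$ (weakly) diagonally dominant---hence symmetric positive semidefinite, as required---and forces $\mathrm{Tr}(\bm G)=\sum_a\sum_{b\ne a}|G_{ab}|=m$. With this choice the objective collapses to $\bfp(\bfx)^\top\bm G\bfp(\bfx)=m+(2N(\bfx)-m)=2N(\bfx)$ for every $\bfx\in\{-1,1\}^d$.

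Since the objective equals $2N(\bfx)$ identically, the reduction is \emph{exactly} approximation preserving: an assignment achieving quadratic value $V$ satisfies precisely $V/2$ equations, and the optimum of \eqref{eq:diag-dominant} equals $2\,\mathrm{OPT}_{\mathrm{3LIN}}$. Hence a polynomial-time $\rho$-approximation for \eqref{eq:diag-dominant} yields a $\rho$-approximation for $\mathrm{MAX\text{-}3LIN}$, and H{\aa}stad's $\tfrac12$-inapproximability \cite{haastad2001some} transfers verbatim. I expect the one delicate point to be verifying that the off-diagonal part of $\bm G$ reproduces the cubic sum exactly, with no spurious lower- or equal-degree terms: this uses that a product of two degree-one (or two degree-two) monomials is never cubic, that each cubic monomial corresponds to a unique triple---so distinct equations never collide, which is precisely why the instance is assumed to contain no two equations on the same variable set---and that the chosen third index $k_t$ lies outside the pair $\{i_t,j_t\}$, so no $x_i^2=1$ cancellation degrades the degree. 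The simultaneous requirement that $\bm G$ be diagonally dominant \emph{and} have trace exactly $m$ (so that $\mathrm{Tr}(\bm G)$ cancels the $-m$ and preserves the ratio) is what makes the tight diagonal choice essential; any slack there would reintroduce an additive constant and weaken the transferred hardness factor.
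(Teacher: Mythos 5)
Your proof is correct and takes essentially the same approach as the paper: all degree-one monomials plus one degree-two monomial per equation, symmetric off-diagonal entries coupling that pair monomial with the remaining variable (signed by the equation's right-hand side), and a tight diagonally dominant diagonal so that on the hypercube the quadratic form exactly counts satisfied equations (yours evaluates to $2N(\bfx)$ rather than $N(\bfx)$ because you use entries $\pm 1/2$ instead of the paper's $\pm 1/4$, which is immaterial for approximation preservation). The only other differences are cosmetic: which pair of the three variables carries the degree-two monomial, and your Fourier-style identity in place of the paper's case analysis of $(1 \pm x_i x_j x_k)/2$.
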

\begin{proof}
  We observe that $z_i + z_j + z_k \equiv 0 \mod 2$ is satisfied if and only if $\{z_i, z_j, z_k\}$ contains an even number of ones. Similarly, $z_i + z_j + z_k \equiv 1 \mod 2$ if and only if  $\{z_i, z_j, z_k\}$ contains an odd number of ones. 
  % In this proof, we equivalently find a solution $z \in \{-1, 1\}^d$ instead of $0, 1$. The solution $z$ can be mapped back to a $0, 1$ solution by setting $x_i = \frac{z_i + 1}{2}$ for all $i \in [d]$. 
  Given the set of $m$ equations, we now construct an instance of problem \eqref{eq:diag-dominant} by defining the set of monomials $\mathcal{M}$ and the matrix $\bm{G}$. Our reduction will have the following property:
 \begin{itemize}
  \item For all $\bm{x} \in \{-1, 1\}^d$ the value $p(\bm{x})^{\top} \bm{G} p(\bm{x})$ is equal to the number of equations satisfied by the $\mathrm{MAX-3LIN}$ solution $z_i = \frac{x_i + 1}{2}$ for all $i \in [p]$.
 \end{itemize}%For any solution $x_1, \ldots, x_d$ we can transform it to $z_i = 1-2x_i \in \{-1,1\}$. Then $x_i + x_j + x_k \equiv 0 \mod 2$ is equivalent to $z_i, z_j, z_k$ having an even number of $-1$ and $x_i + x_j + x_k \equiv 1 \mod 2$ is equivalent to having an odd number of $-1$. We now construct a diagonally dominant symmetric matrix $G$ and set of monomials $\mathcal{M}$ which define an instance of this problem. 

 Firstly, we let $\mathcal{M}$ contain all degree-one monomials meaning $m_i(\bfx) = x_i$ for $i = 1, \ldots, d$. For every equation $z_i + z_j + z_k$ with $i < j < k$ we add a degree-two monomial corresponding to pair $\{j,k\}$ meaning $m_{jk}(\bfx) = x_jx_k$. Let $\mathcal{C}$ be a set of subsets that defines the set of all monomials meaning for all $F \in \mathcal{C}$, $m_{F}(\bfx)= \prod_{i \in F}x_i \in \mathcal{M}$. We will use the elements $\mathcal{C}$ to index entries of the matrix $\bm{G}$, which will have dimensions $|\mathcal{C}| \times |\mathcal{C}|$. For every equation $z_i + z_j + z_k \equiv 0 \mod 2$ with $i < j < k$ we set the entry $G_{i, \{j,k \}} = G_{\{j,k \}, i} = -1/4$ and for $z_i + z_j + z_k \equiv 1 \mod 2$ we set the entry $G_{i, \{j,k \}} = G_{\{j,k \}, i} = 1/4$. All other off-diagonal entries are set to $0$. We define the diagonal entries by setting $G_{F_1, F_1} = \sum_{F_2 \in \mathcal{C}, F_2 \neq F_1}|G_{F_1, F_2}|$ for all $F_1 \in \mathcal{C}$. Thus, by construction, $\bm{G}$ is symmetric, diagonally dominant, and therefore also positive semi-definite. We define sets $\mathcal{C}_1,\mathcal{C}_2$ where $\mathcal{C}_1$ contains all $(i, j, k)$ such that $z_i + z_j + z_k  \equiv 0 \mod 2$ is an equation in the $\mathrm{MAX-3LIN}$ instance and $\mathcal{C}_2$ contains all $(i, j, k)$ such that $z_i + z_j + z_k  \equiv 1 \mod 2$ is an equation. For any $\bfx \in \{ -1, 1\}^d$, we have
 \begin{align*}
  \bfp(\bfx)^\top \bm{G} \bfp(\bfx) =  \frac{1}{2}\sum_{(i, j,k) \in \mathcal{C}_2  } \left(1 + x_ix_jx_k \right)+ \frac{1}{2}\sum_{(i, j,k) \in \mathcal{C}_1  } \left( 1 - x_ix_jx_k\right). 
 \end{align*}
 Then we observe that 
 \begin{align}
\frac{1 + x_jx_jx_k}{2} = \begin{cases} 1 & \text{if } x_i, x_j,x_k \text{ has an odd number of } -1 \\ 
 0 & \text{ otherwise} \end{cases} \label{eq:hardness-1}
 \end{align}
  and 
  \begin{align}\frac{1 - x_jx_jx_k}{2} = \begin{cases} 1 & \text{if } x_i, x_j,x_k \text{ has an even number of } -1 \\ 
 0 & \text{ otherwise} \label{eq:hardness-2}
 \end{cases}.\end{align}
 
 Then the vector $\bfx=(x_1, \ldots, x_d)$ can be mapped to a $\mathrm{MAX-3LIN}$ solution $\bfz=(z_1, \ldots, z_d)$ by setting $z_i= \frac{x_i + 1}{2}$. The mapping  from $\bfx$ and $\bfz$ has the property that $p(\bm{x})^\top \bm{G} p(\bm{x})$ is equal to the number of equations $\bfx$ satisfies in the $\mathrm{MAX-3LIN}$ instance. This follows by equations \eqref{eq:hardness-1}, \eqref{eq:hardness-2} and the fact that $x_i = -1$ implies $z_i = 0$ and $x_i = 1$ implies $z_i = 1$.
\QEDA
\end{proof}

\section{Continuous relaxation.} \label{sec:relaxation}
In this section, we consider the continuous relaxation \eqref{eq_d_opt_combinatorial_discrete_exp} discussed in Section \ref{sec:local-theory} and give an algorithm to solve it to optimality using column generation. We use both the primal~\eqref{eq_d_opt_combinatorial_discrete_exp} and dual~\eqref{lagrangian_dual_relaxation} to solve the problem efficiently.

The algorithm works by maintaining a set of candidate vectors $\A'\subseteq \A$. We initialize the set $\A'$ as a random subset of $\A$ of small cardinality. We solve the restricted primal obtained by setting all variables for vectors not in $\A'$ to zero and the corresponding restricted dual to optimality. We give the restricted primal and dual below, and for a set $\A'$, we refer to the restricted program as $R(\A')$.   
\begin{align} \displaystyle
\max_{\bm{\lambda}} &\quad \ln ( \det ( \sum_{\bfv \in \A'} \lambda_{\bfv} \bfv \bfv^\top ))    & \min_{\bm{\Lambda},\nu}  & \quad k\nu - \ln( \det(\bm\Lambda)) - p  & \label{eq:restricted-program} \\ 
\text{s.t.} &\quad \sum_{\bfv \in  \A'} \lambda_{\bfv} = k  &\text{s.t.} &\quad \bfv^\top \bm\Lambda \bfv \leq \nu , \quad \forall \bfv \in \A' \nonumber  \\ 
&\quad\lambda_{\bfv}  \geq 0, \quad\, \forall \bfv \in \A' &&\quad \bm\Lambda  \succeq 0. \:\:\ \nonumber \end{align}
In the following lemma, we show that it is straightforward to obtain an optimal dual solution from an optimal primal solution.  %first-order optimality conditions.

\begin{lemma}\label{lem_dual_optiaml_restrict}
Let $\lambda_{\bfv}$ for all $\bfv \in \A'$ be an optimal solution for the restricted primal with set of vectors, $\A'$. Then the dual solution  $\bm\Lambda = (  \sum_{\bfv \in \A'} \lambda_{\bfv} \bfv \bfv^\top )^{-1}$ and $\nu = \max_{\bfv \in \A'} \bfv^\top \bm\Lambda \bfv$ is an optimal solution for the restricted dual with set of vectors $\A'$. 
\end{lemma}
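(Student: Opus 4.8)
The plan is to prove optimality of the candidate dual solution by showing it is feasible and that its objective value coincides with the optimal primal value; strong duality (which holds for the restricted program by the Slater condition, exactly as already invoked for \eqref{lagrangian_dual_relaxation} via \cite{ben2001lectures}) then forces optimality. Write $\bm{Y} := \sum_{\bfv \in \A'} \lambda_{\bfv} \bfv\bfv^\top$ for the optimal primal matrix. First I would argue that $\bm{Y}$ is positive definite: an optimal primal solution must have finite objective $\ln\det(\bm{Y})$, so $\bm{Y}$ is nonsingular, whence $\bm\Lambda = \bm{Y}^{-1}$ is well defined and positive definite, giving $\bm\Lambda\succeq 0$. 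Feasibility of the remaining dual constraints $\bfv^\top\bm\Lambda\bfv\le\nu$ for all $\bfv\in\A'$ is immediate from the definition $\nu = \max_{\bfv\in\A'}\bfv^\top\bm\Lambda\bfv$. Substituting $\ln\det(\bm\Lambda) = -\ln\det(\bm{Y})$, the dual objective $k\nu - \ln\det(\bm\Lambda) - p$ becomes $k\nu + \ln\det(\bm{Y}) - p$, which equals the primal objective $\ln\det(\bm{Y})$ precisely when $\nu = p/k$.

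Thus the crux is to show $\nu = p/k$, and here I would use the first-order optimality conditions of the concave restricted primal. Since $\partial/\partial\lambda_{\bfv}\,\ln\det(\bm{Y}) = \operatorname{tr}(\bm{Y}^{-1}\bfv\bfv^\top) = \bfv^\top\bm{Y}^{-1}\bfv$, the KKT conditions with a multiplier $\mu$ for the constraint $\sum_{\bfv}\lambda_{\bfv}=k$ and multipliers $s_{\bfv}\ge 0$ for $\lambda_{\bfv}\ge 0$ read $\bfv^\top\bm{Y}^{-1}\bfv - \mu + s_{\bfv} = 0$ together with the complementary slackness $s_{\bfv}\lambda_{\bfv}=0$. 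These give $\bfv^\top\bm{Y}^{-1}\bfv = \mu$ whenever $\lambda_{\bfv}>0$ and $\bfv^\top\bm{Y}^{-1}\bfv\le\mu$ otherwise, so that $\mu = \max_{\bfv\in\A'}\bfv^\top\bm{Y}^{-1}\bfv = \nu$. To pin down the value, multiply the support identity $\bfv^\top\bm{Y}^{-1}\bfv = \mu$ by $\lambda_{\bfv}$ and sum over $\bfv$: the right-hand side is $\mu\sum_{\bfv}\lambda_{\bfv} = \mu k$, while the left-hand side is $\sum_{\bfv}\lambda_{\bfv}\operatorname{tr}(\bm{Y}^{-1}\bfv\bfv^\top) = \operatorname{tr}\!\big(\bm{Y}^{-1}\bm{Y}\big) = \operatorname{tr}(\bm{I}_p) = p$. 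Hence $\mu k = p$ and $\nu = \mu = p/k$.

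With $\nu = p/k$ established, the feasible dual $(\bm\Lambda,\nu)$ attains the primal optimal value, so by weak duality it must be an optimal solution of the restricted dual in \eqref{eq:restricted-program}. The step I expect to be the main obstacle is the middle one: justifying that the optimal $\bm{Y}$ is nonsingular (so that stationarity expressed through $\bm{Y}^{-1}$ is valid) and correctly extracting $\nu = \max_{\bfv}\bfv^\top\bm{Y}^{-1}\bfv$ from complementary slackness; once this is in place, the concluding trace identity $\operatorname{tr}(\bm{Y}^{-1}\bm{Y}) = p$ is the short but decisive computation. An alternative to invoking KKT would be to check complementary slackness directly between the given primal solution and the candidate dual, but the stationarity route is cleanest and makes the value $\nu = p/k$ transparent.
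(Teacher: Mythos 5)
Your proof is correct, and it fills in details that the paper's own proof only asserts. The paper argues via the Lagrangian of a reformulated primal that introduces an auxiliary matrix variable $\bm{X}$ (with constraint relating $\bm{X}$ to $\sum_{\bfv}\lambda_{\bfv}\bfv\bfv^\top$), takes stationarity with respect to $\bm{X}$ to conclude $\bm\Lambda=\bm{X}^{-1}$, and then simply states that an optimal dual solution satisfies $\bm{X}=\sum_{\bfv}\lambda_{\bfv}\bfv\bfv^\top$ and $\nu=\max_{\bfv}\bfv^\top\bm\Lambda\bfv=p/k$, without deriving the value $p/k$ or verifying dual optimality. You instead work directly in $\lambda$-space: stationarity plus complementary slackness give $\bfv^\top\bm{Y}^{-1}\bfv=\mu$ on the support and $\le\mu$ off it, and the trace identity
\begin{equation*}
\sum_{\bfv}\lambda_{\bfv}\,\bfv^\top\bm{Y}^{-1}\bfv=\operatorname{tr}\bigl(\bm{Y}^{-1}\bm{Y}\bigr)=p
\end{equation*}
pins down $\nu=\mu=p/k$; matching the dual objective $k\nu+\ln\det(\bm{Y})-p$ to the primal optimum then certifies optimality by weak duality alone, so you never need to invoke strong duality or Slater's condition. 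What each approach buys: the paper's route makes transparent where the dual program \eqref{eq:restricted-program} comes from (dualizing the matrix inequality), while yours is self-contained, supplies the missing computation behind $\nu=p/k$, and yields a certificate-style conclusion. One small point to make airtight: your claim that the optimal $\bm{Y}$ is nonsingular needs the existence of some feasible $\bm\lambda$ with $\ln\det>-\infty$; this holds here because the column-generation algorithm maintains $\A'$ with $\det\bigl(\sum_{\bfv\in\A'}\bfv\bfv^\top\bigr)>0$ (so uniform weights $k/|\A'|$ give a finite objective), and it is in any case presupposed by the lemma statement, which writes $\bm\Lambda$ as an inverse.
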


\begin{proof}
    The proof follows by taking the Lagrangian multipliers for the primal program. We reformulate the primal program as follows, 
    \begin{align*}
 \max_{\bm{\lambda},\bm{X}} \quad &\ln ( \det (\bm{X} )) &      \\
\text{s.t.}\quad & \sum_{\bfv \in \A} \lambda_{\bfv} \bfv \bfv^\top \preceq \bm{X}   \\
 &\sum_{\bfv \in \A }\lambda_{\bfv} = k  \nonumber \\
 &\lambda_{\bfv} \geq 0,\forall \bfv \in \A' \nonumber.
\end{align*}
We use the multiplier $\Lambda$ for the first constraint and $\nu$ for the equality constraint. Then, the Lagrangian is given by 
\[ \max_{ \bm{X}, \bm{\lambda} \geq 0, \bm{\lambda}^\top \bm{1} = k} \min_{\nu, \bm\Lambda \succeq 0}  \ln ( \det( \bm{X} ) ) + \nu ( k-\sum_{\bfv \in \A }\lambda_{\bfv} )   + \sum_{\bfv \in \A } \lambda_{\bfv} \bfv^\top \bm\Lambda \bfv - \langle \bm\Lambda, \bm{X}\rangle.  \]
Taking the derivative with respect to $\bm{X}$ gives us $(\bm{X}^{-1})^\top  - \bm\Lambda = 0$ implying that $\bm\Lambda = \bm{X}^{-1}$. Hence, there exists an optimal dual solution that must also satisfy $\bm{X} =  \sum_{\bfv \in \A} \lambda_{\bfv} \bfv \bfv^\top$ and $\nu=\max_{\bfv \in \A'} \bfv^\top \bm\Lambda \bfv=p/k$. This concludes the proof of the lemma.\QEDA
\end{proof}

After solving the restricted primal and dual programs \eqref{eq:restricted-program} to optimality, in the next step, the algorithm checks whether the optimal dual solution to the restricted dual program is feasible for the original dual problem. If it is, we obtain the result that the current primal and dual solutions are also optimal for the original problem. Otherwise, we find a violating constraint to dual program corresponding to some vector $\bfv\in  \A\setminus \A'$. We then add $\bfv$ to $\A'$ and iterate in solving the updated formulated restricted primal and dual programs, which means that we move from the restricted program $R(\A')$ to $R(\A' \cup \{\bfv\})$. 
The main challenge is to solve the \emph{pricing problem} to find out whether there is a violating constraint. Given an optimal dual solution $(\bar{\Lambda}, \bar{\nu})$ to the restricted dual, the problem of checking the feasibility of this solution to the original program is equivalent to solving 
$$\max_{\bfv \in \A} \bfv^\top \bar{\bm\Lambda} \bfv $$ and checking if the maximum is at most $\bar{\nu}$ or not. This problem turns out to be exactly the problem \eqref{pricing_LS} that we have discussed in Section \ref{sec:Pricing}, and we apply the same methodology to solve it by reformulating it as a polynomial optimization problem. In the next proposition, we show that solving the pricing problem optimally also gives us a feasible dual solution. This is useful since it provides an upper bound for the value of the unrestricted program $R(\A)$ and therefore, it is also an upper bound of the value of the integral optimum solution. 

\begin{proposition} \label{proposition:ip-upper-bound}
    Let $(\bar{\bm\Lambda}, \bar{\nu})$ be a feasible solution to the restricted dual $R(\A')$ for a subset $\A' \subset \A$ and $\alpha = \max_{\bfv \in \A} \bfv^\top \bar{\bm\Lambda} \bfv $. Then $(\bar{\bm\Lambda}, \alpha)$ is a feasible dual solution to the unrestricted dual program  $R(\A)$. 
\end{proposition}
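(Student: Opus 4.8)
The plan is a direct verification that the candidate pair $(\bar{\bm\Lambda}, \alpha)$ satisfies both defining constraints of the unrestricted dual $R(\A)$ from \eqref{eq:restricted-program}, namely the semidefiniteness constraint $\bm\Lambda \succeq 0$ and the full family of quadratic constraints $\bfv^\top \bm\Lambda \bfv \le \nu$ for all $\bfv \in \A$. Since these are the only two conditions cutting out the feasible region, checking them both suffices.

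First I would dispatch the semidefiniteness constraint. Because $(\bar{\bm\Lambda}, \bar\nu)$ is by hypothesis feasible for the restricted dual $R(\A')$, it already satisfies $\bar{\bm\Lambda} \succeq 0$. This constraint is shared between $R(\A')$ and $R(\A)$ and does not reference the index set at all, so it carries over verbatim to the candidate solution, whose matrix component is the unchanged $\bar{\bm\Lambda}$.

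Second, I would handle the quadratic constraints, which is where the choice of $\alpha$ does all the work. By its very definition $\alpha = \max_{\bfv \in \A} \bfv^\top \bar{\bm\Lambda}\bfv$, so the inequality $\bfv^\top \bar{\bm\Lambda}\bfv \le \alpha$ holds simultaneously for every $\bfv \in \A$. This is exactly the complete set of constraints of $R(\A)$ with right-hand side $\nu = \alpha$. Combining this with the semidefiniteness check shows that $(\bar{\bm\Lambda}, \alpha)$ lies in the feasible region of $R(\A)$.

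There is essentially no substantive obstacle: the statement is a one-line consequence of defining $\alpha$ as a maximum over the entire set $\A$ rather than over $\A'$. The only points that merit a sentence are that the maximum defining $\alpha$ is attained, since $\A$ is finite (indeed $|\A| \le L^d$), so that $\alpha$ is well defined; and that $\A' \subseteq \A$ forces $\alpha \ge \bar\nu$, which makes the relabeling coherent and shows that passing from $R(\A')$ to $R(\A)$ can only increase the dual objective $k\nu - \ln\det(\bar{\bm\Lambda}) - p$. This monotonicity is precisely what makes the construction yield a valid upper bound, as the surrounding discussion requires.
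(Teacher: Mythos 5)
Your proof is correct and takes essentially the same approach as the paper: both verify that $\bar{\bm\Lambda}\succeq 0$ carries over unchanged and that the constraints $\bfv^\top\bar{\bm\Lambda}\bfv\le\alpha$ for all $\bfv\in\A$ hold by the definition of $\alpha$ as a maximum over $\A$. Your additional remarks on attainment of the maximum and the monotonicity $\alpha\ge\bar\nu$ are harmless elaborations of what the paper treats as immediate.
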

\begin{proof}
    This follows immediately, since $\bar{\bm\Lambda} \succeq 0$ and $\bfv^\top \bar{\bm\Lambda} \bfv \leq \alpha $ for all $\bfv \in \A$ by definition of $\alpha$. \QEDA
\end{proof}
We compute $\alpha$ in Proposition~\ref{proposition:ip-upper-bound} every time we solve the pricing problem optimally when running the column generation algorithm. In Section~\ref{sec:experiments}, where we present our experimental results, we utilize Proposition~\ref{proposition:ip-upper-bound} to derive an upper bound for large instances where obtaining the optimal value of $\R(\A)$ is computationally intractable.

To further speed up the implementation, we use additional steps that we describe in the following section.

%If the separation problem can be solved effectively, we can recover the approximation bound in \cite{singh2020approximation,madan2019combinatorial}. In the case of \hod, we can reuse the linearization of the polynomial optimization problem described in Section \ref{sec:Pricing}. This holds since for \hod, the problem \eqref{eq_d_opt_combinatorial_discrete_exp_price} is equivalent to $\max_{\bfx \in \{0, 1\}^d }\bfp(\bfx) \bar{\bm\Lambda} \bfp(\bfx)$.

%\MS{Move Proposition to proof of Corollary.}

\subsection{Further implementation steps.}\label{Section:colgen-implement}
In this section, we describe some extra steps that speed up the column generation algorithm. %We first give an overview of some of the main ideas used in the algorithm in addition to the column generation procedure. 
The full algorithm is described in Algorithm \ref{alg:col-gen}. 
%\MS{Describe Algorithm in words.}

\paragraph{Sparsification.} %We first describe some of the ideas used in the algorithm.
To ensure that the size of set $\mathcal{Y}'$ does not increase too much over iterations, we describe a sparsification procedure that reduces the size of set $\mathcal{Y}'$ without any loss to the objective value. This is useful since the formulation~\eqref{eq:restricted-program} depends on the size of $\A'$. In particular, whenever $|\A'|>\binom{p+1}{2}$, we can find a subset $\A'' \subset \A'$ of strictly smaller size such that the restricted problem $R(\A'')$ has the same objective as $R(\A')$. The search for $\A''$ can be reduced to solving a linear program described below. Given a feasible primal solution $\lambda^*_{\bfv}$ for all $\bfv \in \A'$ to the restricted problem \eqref{eq:restricted-program}, we formulate a linear programming feasibility problem with variables $\lambda_{\bfv}$ for all $\bfv \in \A'$. We ensure the matrix constraint that $\sum_{\bfv \in \A'} \lambda_{\bfv} \bfv {\bfv}^\top = \sum_{\bfv \in \A'} \lambda_{\bfv}^* \bfv {\bfv}^\top $ must hold, which is simply a set of linear constraints for variables $\lambda_{\bfv}$. Additionally, we impose that $\bm\lambda$ is feasible for $R(\A')$. Then, we obtain the following LP feasibility problem:
\begin{align}
 \min_{\bm{\lambda}} \quad &0   \label{LP:sparse}   \\
\text{s.t.}\quad &\sum_{\bfv \in \A' } \lambda_{\bfv} \bfv \bfv^\top = \sum_{\bfv \in \A' } \lambda_{\bfv}^*  \bfv \bfv^\top  \nonumber \\
 &\sum_{\bfv \in \A'  }\lambda_{\bfv} = k  \nonumber \\
 &\lambda_{\bfv} \geq 0,\forall \bfv \in \A' \nonumber.
\end{align}
We have the following theoretical bound on the size of the support of a continuous primal solution when we obtain a basic feasible solution. We then set $\A''=\{\bfv: \lambda_{\bfv}>0\}$ and continue.
\begin{proposition}
 For any set of vectors $\A'$ and weights $ \lambda^*_{\bfv} \geq 0$ for all $\bfv \in \A$ satisfying $\sum_{\bfv \in \A'  }\lambda^*_{\bfv} = k$ there exists $\lambda_{\bfv}$ for all $\bfv \in \A'$  satisfying $\sum_{\bfv \in \A'  }\lambda_{\bfv} = k $ such that the following two hold
 \begin{enumerate}
 \item $\sum_{\bfv \in \A' } \lambda_{\bfv} \bfv \bfv^\top = \sum_{\bfv \in \A' } \lambda^*_{\bfv} \bfv \bfv^\top$
 \item $| \{ \lambda_{\bfv} > 0 | \bfv \in \A' \} | \leq \binom{p}{2} + p + 1 $.
 \end{enumerate}
\end{proposition}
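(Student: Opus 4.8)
The plan is to recognize that both requirements in the statement are encoded precisely by the linear feasibility system~\eqref{LP:sparse}, and then to extract a vertex (basic feasible solution) of that system, whose support is automatically small.

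First I would observe that the matrix equation $\sum_{\bfv \in \A'} \lambda_{\bfv} \bfv\bfv^\top = \sum_{\bfv \in \A'}\lambda^*_{\bfv}\bfv\bfv^\top$ is an equality between two symmetric $p\times p$ matrices, and hence imposes at most $\binom{p+1}{2} = \binom{p}{2}+p$ scalar linear constraints on the vector of unknowns $(\lambda_{\bfv})_{\bfv \in \A'}$, one for each entry on and above the diagonal. Together with the normalization $\sum_{\bfv\in\A'}\lambda_{\bfv}=k$, this yields a linear system $\bm{M}\bm{\lambda}=\bm{d}$ with at most $\binom{p}{2}+p+1$ rows, combined with the nonnegativity constraints $\lambda_{\bfv}\ge 0$. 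This is exactly the feasibility region of~\eqref{LP:sparse}.

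Next I would note that this region is nonempty, since the given $\bm{\lambda}^*$ satisfies all of its constraints by hypothesis, and that it is pointed (contains no line) because the nonnegativity constraints confine it to the nonnegative orthant. A nonempty pointed polyhedron in standard form always possesses a basic feasible solution $\bm{\lambda}$. By the standard characterization of basic feasible solutions, the number of strictly positive coordinates of $\bm{\lambda}$ is at most the rank of the equality-constraint matrix $\bm{M}$, which is bounded by its number of rows $\binom{p}{2}+p+1$. Taking this $\bm{\lambda}$ delivers the desired weights: property~1 and the normalization hold because $\bm{\lambda}$ lies in the feasibility region, while property~2 holds by the support bound. Concretely, such a $\bm{\lambda}$ is produced algorithmically by solving~\eqref{LP:sparse} and returning any basic feasible solution, exactly as in the sparsification step.

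I do not expect a genuine obstacle: the result is a direct application of the fundamental theorem of linear programming (existence of basic feasible solutions for a nonempty pointed polyhedron). The only point requiring care is the constraint count, namely verifying that the symmetric matrix equation contributes $\binom{p+1}{2}$ rather than $p^2$ scalar constraints, so that the total number of equality rows matches $\binom{p}{2}+p+1$ exactly.
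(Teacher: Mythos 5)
Your proposal is correct and follows essentially the same route as the paper: both arguments take a basic feasible solution of the LP feasibility problem \eqref{LP:sparse} (which is nonempty since $\bm\lambda^*$ is feasible) and bound the support by the number of equality constraints, $\binom{p}{2}+p+1$. Your version merely spells out details the paper leaves implicit, namely that the symmetric matrix equation contributes only $\binom{p+1}{2}$ scalar rows and that pointedness of the nonnegative-orthant-constrained polyhedron guarantees a vertex exists.
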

\begin{proof}
 The proof of this lemma follows by setting $\bm\lambda$ to any basic feasible solution of LP \eqref{LP:sparse} which exists since $\bm\lambda^*$ is a feasible solution to LP \eqref{LP:sparse}.  The first property holds due to the first constraint in the LP, and the second property holds since the LP has $\binom{p}{2} + p + 1$ constraints apart from non-negativity constraints. Thus, any basic feasible solution would have no more than $\binom{p}{2} + p + 1$ nonzero variables.
\QEDA
\end{proof}

\paragraph{Addition of random columns.} If the optimal value of \eqref{eq:restricted-program} is strictly less than the optimal value $\phi_R$ of the relaxation \eqref{lagrangian_dual_relaxation}, solving the pricing problem \eqref{pricing_LS} with an integer program guarantees that we obtain a violated column corresponding to a vector that is added to $\A'$. However, in practice, it can be slow to solve an integer program every iteration. A simple alternative approach is to add a small subset of random vectors from $\A$ to $\A'$ in addition to the vector that violates the dual constraint. This procedure can accelerate the solution procedure by saving many iterations via preemptively adding vectors without solving an integer program.

Thus, the high level idea of our algorithm can be summarized with the following steps. 
\begin{enumerate}[(i)]
 \item Solve the continuous relaxation of the restricted primal or dual problem \eqref{eq:restricted-program} to obtain a dual solution $(\bm\Lambda,\nu)$. Check if the current dual solution is feasible to the original problem \eqref{lagrangian_dual_relaxation} by solving the pricing problem \eqref{pricing_LS}. If not, then add a subset of random vectors from $\A$ to $\A'$ along with the vector that violates the dual constraint. 
 \item Sparsify the support of the current primal solution if the support is significantly larger than $p^2$. Return to Step 1.
\end{enumerate}

The details of the algorithm are described in Algorithm \ref{alg:col-gen}.
\begin{algorithm}[!t]
\caption{Pricing-based algorithm for the continuous relaxation of \ref{CD-Opt}}\label{alg:col-gen}
\begin{algorithmic}[1]
\Require $\A' \subset \A$  which is a set of vectors and $\bm{S} = \sum_{ \bfv \in \A' }\bfv {\bfv}^\top$ satisfying $\det(\bm{S}) > 0$. Tolerances $\delta, \epsilon, \gamma > 0$
\State Solve the primal program \eqref{eq:restricted-program} on vectors $\A'$ for  to get primal solution $\lambda$. 
\State Compute a feasible dual solution $\bm{\Lambda}, \nu$ from the primal solution.
\While{True}

\State Let $\bfv = \argmax_{\bfv \in \mathcal{Y}} {\bfv}^\top \bm{\Lambda} \bfv$ where the optimization problem is solved using a heuristic.
\If{${\bfv}^\top \bm{\Lambda} \bfv < (1 + \delta)\nu$}
 \State Let ${\bfv} = \argmax_{\bfv \in \mathcal{Y}} \bfv^\top \bm{\Lambda} \bfv$ where the optimization problem is solved using an IP. 
\EndIf

\If{ ${\bfv}^\top \bm{\Lambda} {\bfv} \leq (1 + \epsilon)\nu$ and  ${\bfv}$ was obtained from the IP}
 \Return $\A'$
\EndIf
\State $\A' \gets \A' \cup \{\bfv\}$

\If{$(\bm{\Lambda}, \nu)$ was obtained by solving the primal}

\State Add $p-1$ random vectors from $\mathcal{Y}$ to $\hat{S}$ 

\Else 
\State Add $2(p - 1)^2$ random vectors to $\hat{S}$ 

\EndIf
\If{$|\A'| > \lceil \frac{p^2}{3} \rceil$ and $(\bm{\Lambda}, \nu)$ was obtained by solving the primal}  
 \State Solve an LP to get $\A'' \subseteq \A'$ such that $ \sum_{\bfv \in \A''} \lambda_{\bfv} \bfv {\bfv}^\top  = \sum_{\bfv \in \A'} \lambda_{\bfv} \bfv {\bfv}^\top  $.
 \State $\A' \gets \A''$
\EndIf

\State Compute $(\bm{\Lambda}, \nu)$ using the new solution $\A'$.
\If{the improvement to the objective in the previous step is less than a factor of $\gamma$  }
\State Solve only the dual in future iterations.
\EndIf

\EndWhile
\end{algorithmic}
\end{algorithm}

\section{Experimental results.} \label{sec:experiments}

In this section, we show the efficacy and scalability of our approach for various classes of $D$-optimal design problems, in particular, first-order and second-order design, with knapsack constraints. We also compare our results with \texttt{JMP}~\cite{jmp2010design}, a popular commercial statistical software that can also compute the designs of the variants we study. In particular, we numerically compare the following variants of the $D$-optimal design problem.
\begin{enumerate}[(i)]
    \item  Full first-order model with two levels and a cardinality constraint. In this model, we have all vectors with no more than $r$ ones for some integer $r\leq d$, i.e., $\A=\mathcal{Y} = \{ \bfx \in \{0, 1\}^d: \quad   \bm{1}^\top \bfx   \leq r, x_1 = 1 \} $.
    \item    Full first-order model with two levels and two knapsack constraints. The knapsack constraints are give by two non-negative vectors $\bm{a}_1, \bm{a}_2 \in \mathbb{R}_{+}^{d}$ and two numbers $b_1, b_2 \geq 0$. Then the set of vectors is defined by $\A=\mathcal{Y} = \{ \bfx \in \{0, 1\}^d : \quad  x_1 = 1, \bm{a}_1 ^\top \bfx \leq b_1,  \bm{a}_2^\top  \bfx\leq b_2 \}$

\item We also consider the second-order model mixed with knapsack constraints. We recall the second-order model is defined by a set of monomials of degree at most two, given by set $\mathcal{M}$. That is, each design point is given by $\bfp(\bfx) = (m_1(\bfx), m_2(\bfx), \ldots m_p(\bfx) )^\top$, where $m_j(\bfx)$ is a monomial with degrees at most two. For vectors $\bm{a}_1, \bm{a}_2 \in \mathbb{R}_{+}^{d}$ and two numbers, $b_1, b_2 \geq 0$ get the following set of feasible vectors:
\[ \A = \{ (m_1(\bfx), m_2(\bfx), \ldots m_p(\bfx) )^\top\in \{0,1\}^{p} : \quad  \bfx \in \{0, 1\}^d, x_1 = 1, \bm{a}_1^\top \bfx   \leq b_1,  \bm{a}_2^\top \bfx  \leq b_2  \}.  \]
\end{enumerate}

\textbf{Experimental setup.} We ran the experiments on a Windows machine with an AMD EPYC Processor that has 16 cores. For each of the variants described at the beginning of this section, we ran both the local search algorithm described in Section~\ref{sec:localsearch} and the column generation algorithm described in Section~\ref{Section:colgen-implement}. We gathered various statistics on the runs of the two algorithms, such as total time, objective values, and time spent in \texttt{Gurobi}. Finally, we compared the results of our local search algorithm with \texttt{JMP}. We show a comparison between the objective values and also show that our local search algorithm can improve the \texttt{JMP} solution by using it as a starting point. The implementation is done with Python and is available online on GitHub \cite{githubDOPTDesign}.    %Also tolerance parameter and what they mean.

\subsection{First-order model with the cardinality constraints}

\begin{figure}[ht] 
\includegraphics[width=1.0\textwidth]{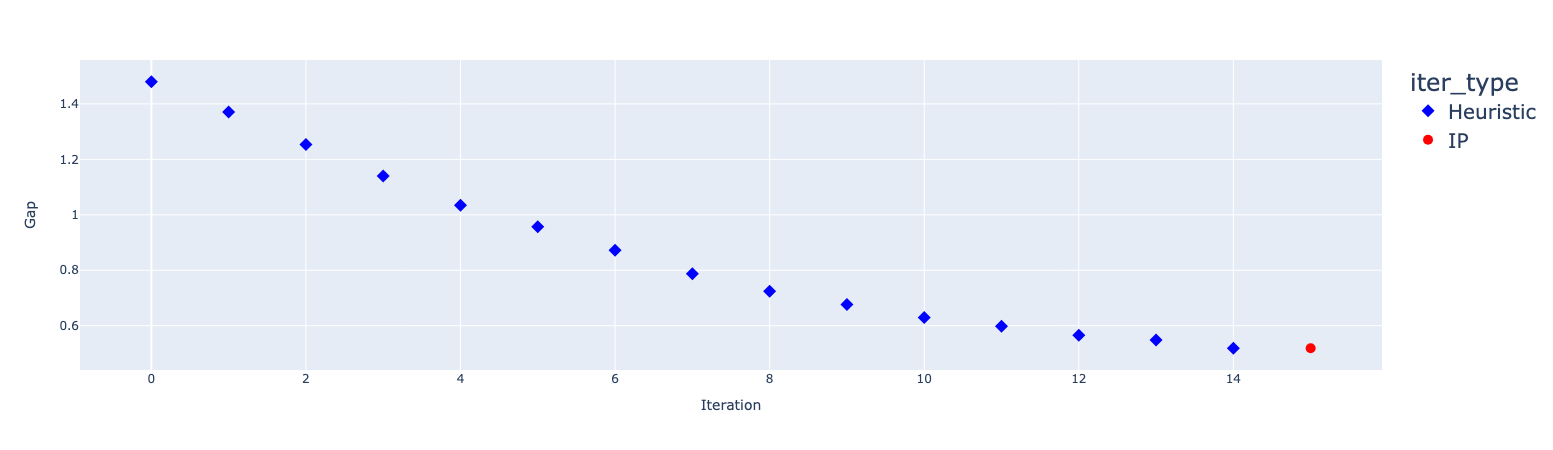}
\centering
\caption{A single run of the local search algorithm for solving the first-order model with the cardinality constraints for $d=18$. Heuristic means the local search found a locally improving move using the bit flip/swap heuristic, and IP means the algorithm failed to find an improving move with the heuristic and had to run \texttt{Gurobi}. In this example, the algorithm used the heuristic to find an improving move in every iteration, and the IP is only used in the last iteration to certify that the solution is a local optima.}\label{figure:local-cardinality}
\end{figure}

We begin with the simplest of the three variants-- the first-order model with the cardinality constraint. For the cardinality constraint, we use $r = \lfloor \frac{d}{3} \rfloor$. First, we begin with a single run of the algorithm for $ d=18$ in Figure~\ref{figure:local-cardinality}. The plot shows that for the simple cardinality constraint, the \eqref{eq:heuristic-1} and \eqref{eq:heuristic-2} heuristics work well to find a local improvement. The figure shows that the algorithm only needed to solve the IP in the last iteration to verify that the solution is a local optima. 

We also compare the results of the local search algorithm to \texttt{JMP} in Table~\ref{table:local_ones_values}. These objective values are for $\ln \det(.)$ and thus, a difference of $0.3$ in the objective value amounts to $e^{0.3}\simeq 1.35$ factor improving in the determinant objective.   Additionally, we are also able to improve the solution returned by \texttt{JMP} by using it as a starting point for our local search algorithm.% We can see from the table that the vast majority of the total time is spent on \texttt{Gurobi}. Specifically, the last iteration usually takes the most time, since it always requires $k$, the total number of points in the design,  \texttt{Gurobi} calls to verify that a solution is locally optimal. 
\begin{table}[!ht]
    \centering
    \caption{A comparison of different methods for solving the first-order model with the cardinality constraint. In this table, Relaxation Value is the value of the convex relaxation~\eqref{lagrangian_dual_relaxation}, and values are $\ln \det$. LS Value is the value of the solution the local search algorithm returned with a random starting solution, \texttt{JMP} Value is the value of the solution \texttt{JMP} returned, and \texttt{JMP} + LS gap is the value of the solution returned by the local search algorithm with the \texttt{JMP} solution as a starting point.}
    \begin{tabular}{|r|r|r|r|r|}
    \hline
        $d$ & LS Value & \texttt{JMP} Value & \texttt{JMP} + LS Value & Upper Bound (Convex Relaxation Value) \\ \hline
        11 & 13.641 & 13.299 & 13.641 & 14.189 \\ \hline
        12 & 18.968 & 18.938 & 18.964 & 19.270 \\ \hline
        13 & 20.785 & 20.521 & 20.800 & 21.085 \\ \hline
        14 & 22.641 & 22.129 & 22.645 & 22.897 \\ \hline
        15 & 27.451 & 27.122 & 27.434 & 27.781 \\ \hline
        16 & 29.437 & 29.095 & 29.415 & 29.895 \\ \hline
        17 & 31.374 & 30.910 & 31.422 & 32.003 \\ \hline
        18 & 36.325 & 35.878 & 36.360 & 36.844 \\ \hline
        19 & 38.639 & 38.014 & 38.694 & 39.189 \\ \hline
        20 & 41.115 & 40.221 & 41.056 & 41.528 \\ \hline
    \end{tabular}
    \label{table:local_ones_values}
\end{table}

Table~\ref{table:local_ones} shows the time it took to run the local search algorithm with a random starting solution. We can see from the table that the vast majority of the total time is spent on \texttt{Gurobi}. Specifically, the last iteration usually takes the most time, since it always requires $k$, the total number of points in the design,  \texttt{Gurobi} calls to verify that a solution is locally optimal.

\begin{table}[!ht]
    \centering
    \begin{tabular}{|r|r|r|r|r|}
    \hline
        $d$ & Total Time & \texttt{Gurobi} Time & \texttt{Gurobi} Iterations & Number of Iterations \\ \hline
        11 & 11.15 & 10.43 & 1 & 1 \\ \hline
        12 & 11.21 & 10.40 & 1 & 2 \\ \hline
        13 & 18.02 & 16.82 & 2 & 8 \\ \hline
        14 & 21.54 & 20.33 & 1 & 7 \\ \hline
        15 & 26.30& 23.97 & 1 & 12 \\ \hline
        16 & 27.54 & 25.79 & 1 & 13 \\ \hline
        17 & 29.73 & 28.12 & 1 & 9 \\ \hline
        18 & 30.57 & 28.15 & 1 & 16 \\ \hline
        19 & 30.65 & 28.71 & 1 & 17 \\ \hline
        20 & 82.53 & 78.89 & 2 & 32 \\ \hline
    \end{tabular}
\caption{Local search timing statistics for solving first-order model with the cardinality constraint. Each row of the table corresponds to a different single run of the algorithm for dimension $d$, and the columns give information on each particular run. Total Time is the total time the algorithm took from beginning to end in seconds, \texttt{Gurobi} Time is the total time the algorithm spent in running \texttt{Gurobi}, \texttt{Gurobi} Iterations is the total number of iterations where the algorithm called \texttt{Gurobi} to search for an improvement, and Iterations is the total number of iterations for that run.}
\label{table:local_ones}
\end{table}

\subsection{First-order model with knapsack constraints} \label{sec:experiment-knapsack}
In the next experiment, we study the first-order model with two knapsack constraints. To generate the two knapsack constraints, we first generated two vectors $\bm{a}_1, \bm{a}_2 \in \mathbb{R}^d$ where $a_{i1} = 0$ and $80\%$ of the entries are sampled uniformly at random from the set $\{0, \ldots, 5\}$ and the remaining entries are uniformly at random from $\{20, \ldots, 30\}$. The right-hand side the two constraints $b_i = \E[ \bm{a}_i^\top \bfx ]=1/2\sum_{j=2}^da_{ij}$ for $\bfx$ drawn uniformly at random from the set of vectors $\A=\mathcal{Y}= \{ \bfx \in \{0, 1\}^d :\quad x_1 = 1\}$ for $i=1,2$. 

\begin{figure}[h] 
\includegraphics[width=1\textwidth]{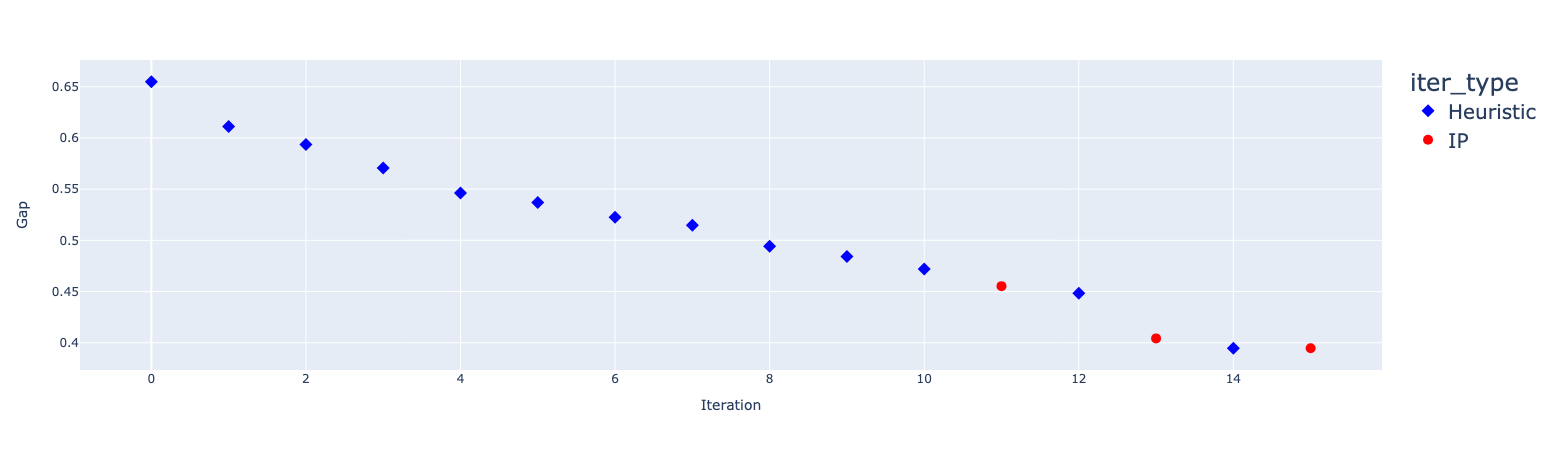}
\centering
\caption{A single run of the local search algorithm for solving first-order model with knapsack constraints for $d=13$. Heuristic means the local search found a locally improving move using the bit flip/swap heuristic, and IP means the algorithm failed to find an improving move with the heuristic and had to run \texttt{Gurobi} to solve the IP. }\label{figure:local-knap}
\end{figure}

Figure \ref{figure:local-knap} illustrates a single run of the local search. In this case, we solve the integer program in multiple iterations, in contrast to the first-order model with the cardinality constraint. In Figure~\ref{figure:local-knap}, we can see that the local search algorithm would have been stuck in iteration 10 using only (\ref{eq:heuristic-1}, \ref{eq:heuristic-2}) and that using the IP helps to return a better solution. 

Table~\ref{table:local_knap_values} compares the results of \texttt{JMP} and our algorithm. \texttt{JMP} usually gets a better objective compared to our algorithm with a random starting solution. Nonetheless, for most instances, we can further improve the \texttt{JMP} solution using our local search algorithm when we use the \texttt{JMP} solution as an initial solution. Figure~\ref{figure:local-knap-jmp} illustrates a single run in which we used the \texttt{JMP} solution as a starting point. Here, the first iteration was an IP iteration, since the \texttt{JMP} solution was locally optimal in the bit flip/swaps neighborhood.

\begin{table}[!ht]
    \centering
    \caption{A comparison of different methods for solving the first-order model with knapsack constraints.}
    \begin{tabular}{|r|r|r|r|r|}
    \hline
        d & LS Value & \texttt{JMP} Value & \texttt{JMP} + LS Value & Relaxation Value \\ \hline
        11 & 17.677 & 17.717 & 17.717 & 18.007 \\ \hline
        12 & 20.675 & 20.767 & 20.767 & 21.068 \\ \hline
        13 & 23.407 & 23.501 & 23.501 & 23.802 \\ \hline
        14 & 26.192 & 26.264 & 26.274 & 26.678 \\ \hline
        15 & 29.912 & 29.987 & 29.987 & 30.378 \\ \hline
        16 & 32.668 & 32.636 & 32.767 & 33.262 \\ \hline
        17 & 35.971 & 36.027 & 36.044 & 36.663 \\ \hline
        18 & 39.090 & 39.152 & 39.152 & 39.729 \\ \hline
        19 & 42.467 & 42.518 & 42.598 & 43.193 \\ \hline
        20 & 45.928 & 45.924 & 45.924 & 46.661 \\ \hline
    \end{tabular}
    \label{table:local_knap_values}
\end{table}

\begin{figure}[!ht] 
\includegraphics[width=17cm]{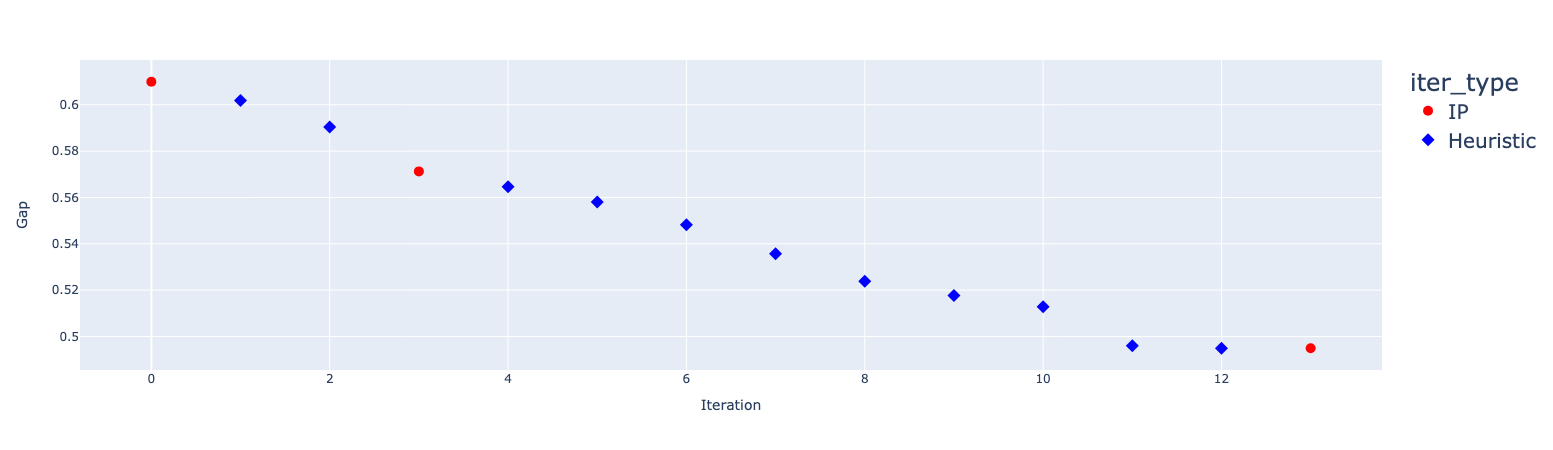}
\centering
\caption{A single run of the local search algorithm for solving first-order model with knapsack constraints for $d=16$. The local search is initialized with a solution given by \texttt{JMP}, and the IP is needed in the first iteration to find an improving move.  A single run of the local search algorithm. The Heuristic means the local search found a locally improving move using the bit flip/swap heuristic, and IP means the algorithm failed to find an improving move with the heuristic and had to run \texttt{Gurobi} to solve IP. The Gap is the difference between the value of the convex relaxation and the value of the solution. }
\label{figure:local-knap-jmp}
\end{figure}

Table~\ref{table:local_kn} displays the time it took to run the local search algorithm. Overall, the times are a bit higher in some runs than those with the cardinality constraint, since the \texttt{Gurobi} was executed in multiple iterations to solve IP. 

\begin{table}[!ht]
    \centering
    \caption{Local search timing statistics for solving first-order model with knapsack constraints.}
    \begin{tabular}{|r|r|r|r|r|}
    \hline
        $d$ & Total Time & \texttt{Gurobi} Time & \texttt{Gurobi} Iterations & Iterations \\ \hline
        11 & 9.10 & 8.09 & 1 & 8 \\ \hline
        12 & 13.61 & 12.22 & 1 & 5 \\ \hline
        13 & 32.78 & 30.42 & 3 & 16 \\ \hline
        14 & 28.06 & 26.84 & 2 & 15 \\ \hline
        15 & 20.89 & 19.79 & 1 & 10 \\ \hline
        16 & 44.77 & 42.65 & 2 & 24 \\ \hline
        17 & 24.64 & 22.49 & 1 & 23 \\ \hline
        18 & 27.41 & 25.50 & 1 & 16 \\ \hline
        19 & 55.36 & 52.26 & 2 & 29 \\ \hline
        20 & 73.51 & 69.19 & 2 & 35 \\ \hline
    \end{tabular}
    \label{table:local_kn}
\end{table}

Table~\ref{table:colgen-knapsack} shows timings for solving the convex relaxation \eqref{lagrangian_dual_relaxation}. A large portion of the time taken to solve the convex relaxation is spent solving the restricted dual program \eqref{eq:restricted-program} with \texttt{Mosek}. In Figure \ref{figure:colgen-knap}, we show the objective improvements returned by the column generation algorithm for a single instance. When we solve the separation problem \eqref{pricing_LS} optimally, we can use the value to construct a feasible dual solution, as shown in Proposition~\ref{proposition:ip-upper-bound}, to get an upper bound on $\phi_R$ which is called IP bound in the figure. 

\begin{table}[!ht]
    \centering
    \caption{Timing statistics (in seconds) of the column generation algorithm for solving the first-order model with knapsack constraints. Each row corresponds to timings for a single run of the column generation for dimension $d$. Total Time is the total time the algorithm took from beginning to end, \texttt{Gurobi} Time is the total amount of time the algorithm spent in \texttt{Gurobi}, \texttt{Mosek} Time is the total time the algorithm spent in \texttt{Mosek}, and Iterations is the total number of iterations it took.  }
\begin{tabular}{|r|r|r|r|r|r|}
\hline
d  & Total Time & \texttt{Gurobi} Time & \texttt{Mosek} Time & IPs solved & Iterations \\ \hline
11 & 11.45      & 3.44        & 0.13       & 17       & 17         \\ \hline
12 & 20.32      & 8.08        & 1.19       & 33       & 41         \\ \hline
13 & 17.64      & 4.29        & 0.07       & 16       & 18         \\ \hline
14 & 53.99      & 17.36       & 0.11       & 44       & 67         \\ \hline
15 & 113.31     & 32.34       & 43.32      & 82       & 103        \\ \hline
16 & 151.70      & 63.94       & 27.31      & 130      & 201        \\ \hline
17 & 3350.19    & 112.5       & 3173.13    & 181      & 260        \\ \hline
18 & 3392.73    & 118.96      & 3194.11    & 181      & 218        \\ \hline
19 & 22123.85   & 225.19      & 21799.30    & 255      & 340        \\ \hline
20 & 45500.13   & 307.95      & 45083.08   & 236      & 365      \\ \hline 
\end{tabular}
\label{table:colgen-knapsack}
\end{table}

\begin{figure}[!ht] 
\includegraphics[width=17cm]{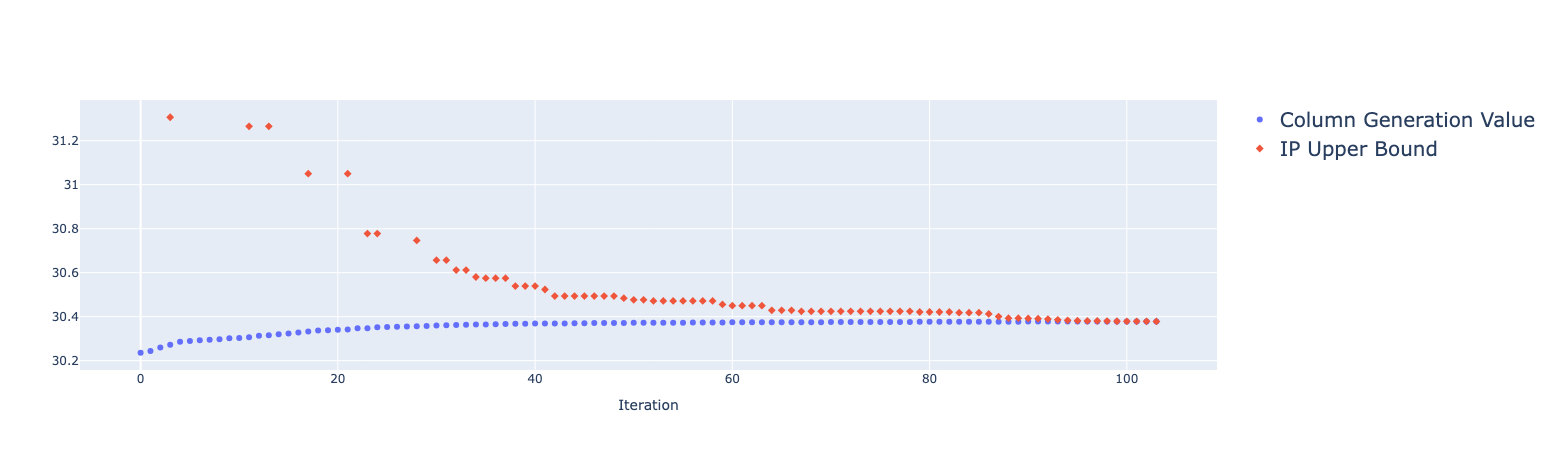}
\centering
\caption{A single run of the column generation algorithm for the first-order model with knapsack constraints for $d=15$. The IP upper bound is the value of a dual feasible solution, and the Column Generation Value is the value of the column generation algorithm.}
\label{figure:colgen-knap}
\end{figure}

\subsection{Second-order model with knapsack constraints}
\begin{figure}[ht]
\includegraphics[width=19cm]{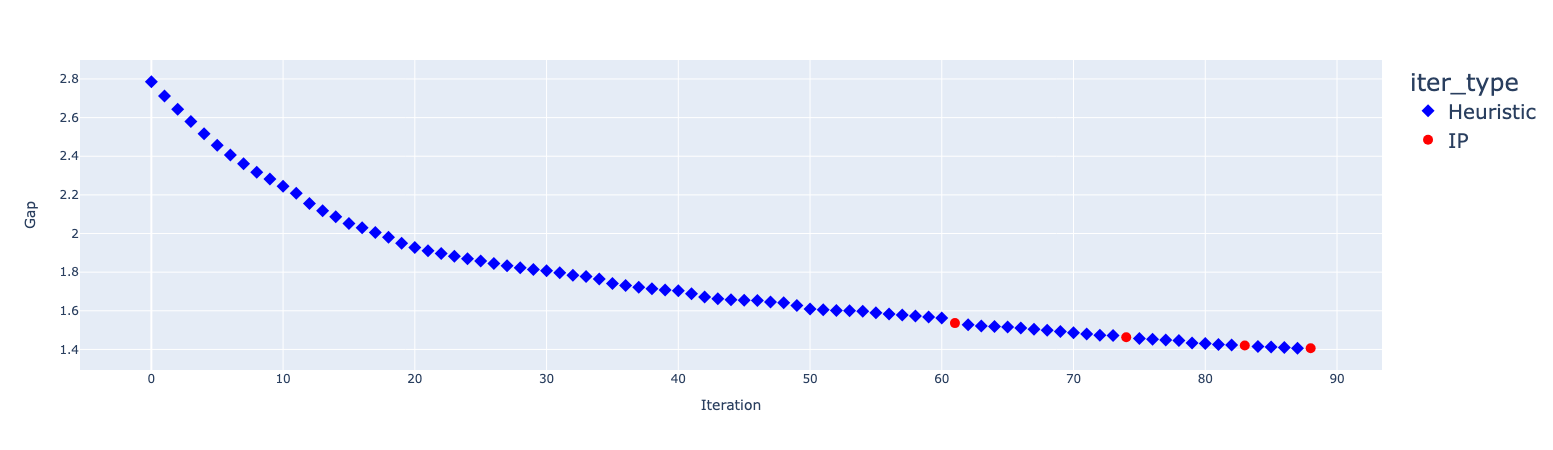}
\centering
\caption{A single run of the local search algorithm for solving second-order model with knapsack constraints for $d=16$. Heuristic means the local search found a locally improving move using the bit flip/swap heuristic and IP means the algorithm failed to find an improving move with the heuristic and had to run \texttt{Gurobi}. Using only the heuristic, the local search would have gotten stuck at iteration $60$ with a gap of $1.562$. Using the IP allows the algorithm to continue by improving the gap to $1.535$ at iteration $61$ and finally terminate at iteration  $88$ with a gap of $1.406$.    }
\label{figure:local-pairs}
\end{figure}

In this experiment, we let the first $d$ monomials as degree one monomials, i.e., we let $m(\bfx) = x_i$ for $i \in [d]$ and the rest are the first $\binom{ \lfloor \frac{d}{2} \rfloor }{2}$ pairs out of all $\binom{d}{2}$ pairs. This means that we also incorporate all the monomials in the form of $ x_\ell x_j$ for all pairs $\{\ell, j\}$ from the set $\{2, 3, \ldots, \lfloor \frac{d}{2} \rfloor + 1 \}$ with $\ell\neq j$. The two knapsack constraints are the same as those used in Section~\ref{sec:experiment-knapsack}.

Figure~\ref{figure:local-pairs} illustrates a single run of the local search algorithm with $d = 20$. We can see that the IP is much more important for finding an improving move for this variant of the design problem. The heuristic failed to find an improving solution during earlier iterations, and the IP was used in more iterations. 

Table~\ref{table:local_pairs} shows the run times of the local search algorithm with a random starting point. The times for this variant are significantly higher than those in the previous variants since the IP was used in each run more frequently, and the dimension of the design points is much larger due to degree-two monomials. 

Table~\ref{table:local_pairs_values} shows a comparison between the value of the solution returned by the local search algorithm and \texttt{JMP}. For this variant, we are able to almost always get a better value than \texttt{JMP} and also improve the \texttt{JMP} solution using the local search. 

\begin{table}[!ht]
    \centering
    \caption{Local search timing statistics for second-order model with knapsack constraints.}
    \begin{tabular}{|r|r|r|r|r|}
    \hline
        $d$ & Total Time & \texttt{Gurobi} Time & \texttt{Gurobi} Iterations & Iterations \\ \hline
        11 & 71.26 & 67.88 & 1 & 9 \\ \hline
        12 & 147.75 & 144.44 & 1 & 30 \\ \hline
        13 & 147.80 & 141.45 & 2 & 30 \\ \hline
        14 & 364.51 & 355.45 & 3 & 38 \\ \hline
        15 & 684.80 & 672.79 & 3 & 63 \\ \hline
        16 &1690.40 & 1664.10 & 4 & 89 \\ \hline
        17 & 1480.17 & 1422.48 & 2 & 97 \\ \hline
        18 & 1373.81 & 1316.34 & 1 & 98 \\ \hline
        19 & 2048.41 & 1976.67 & 2 & 141 \\ \hline
        20 & 6926.83 & 6722.65 & 6 & 247 \\ \hline
    \end{tabular}
    \label{table:local_pairs}
\end{table}

\begin{table}[!ht]
    \centering
    \caption{Local search values for second-order model with knapsack constraints. Then relaxation values listed for $d=19,20$ are the best upper bounds found for the instances using Proposition~\ref{proposition:ip-upper-bound}. For $d=19$, the relaxation value of the optimal solution is in the interval $[132.547, 132.634]$ and for $d=20$ the relaxation value of the optimal solution is in the interval $[164.070, 164.273]$. The lower bound comes is the value of a feasible primal solution, and the upper bound is the value of a feasible dual solution.}
    \begin{tabular}{|r|r|r|r|r|}
    \hline
        $d$ & LS Value & \texttt{JMP} Value & \texttt{JMP} + LS Value & Relaxation Value \\ \hline
        11 & 32.804 & 32.824 & 32.824 & 33.367 \\ \hline
        12 & 46.819 & 46.638 & 46.707 & 47.484 \\ \hline
        13 & 45.761 & 45.505 & 45.627 & 46.426 \\ \hline
        14 & 68.768 & 68.744 & 69.067 & 69.866 \\ \hline
        15 & 73.762 & 73.513 & 73.712 & 74.798 \\ \hline
        16 & 90.827 & 90.272 & 90.681 & 92.233 \\ \hline
        17 & 98.448 & 98.137 & 98.531 & 100.451 \\ \hline
        18 & 120.817 & 119.719 & 120.901 & 122.743 \\ \hline
        19 & 130.339 & 130.183 & 130.384 & $132.634^*$ \\ \hline
        20 & 161.347 & 161.097 & 161.298 & $164.273^*$ \\ \hline
    \end{tabular}
    \label{table:local_pairs_values}
\end{table}

\section{Conclusion}
We developed local search and column generation algorithms for $D$-optimal design problems that are centered around solving a quadratic optimization problem \eqref{pricing_LS}. For the local search algorithm, we established a theoretical approximation guarantee, assuming that \eqref{pricing_LS} can be solved using an algorithm with a known approximation factor. Additionally, we showed that achieving an approximation ratio better than $1/2$ for \eqref{pricing_LS} in the second-order model is NP-Hard. 

We implemented the local search and column generation for  both the first-order and second-order models with knapsack constraints. Our implementation solves \eqref{pricing_LS} with an integer program, and the experimental results in Section~\ref{sec:experiments} highlight the importance of the integer program in obtaining high-quality solutions. As the model complexity increases, basic heuristics such as \eqref{eq:heuristic-1} and \eqref{eq:heuristic-2} fail to find local improvements, making the integer program essential to find improving moves in larger neighborhoods. Notably, our local search starting from a random solution often outperforms \texttt{JMP}, and using the solution returned by \texttt{JMP} as a starting point for local search typically yields further improvements.

In Section~\ref{sec:local-theory}, we established a negative result for the pricing problem \eqref{pricing_LS} in the special case where the matrix $\bm{G}$ is diagonally dominant for the second-order model. A natural theoretical follow-up question is what kind of approximation guarantee can be achieved in the general case, where $\bm{G}$ is any positive semi-definite matrix. Answering this would allow us to apply Theorem~\ref{theorem:approx-local} and obtain a result similar to Corollary~\ref{corollary:full-linear-guarantee} for the second-order model. From an empirical perspective, it would be interesting to use local search and column generation algorithms in a branch-and-bound framework to find an integral optimal solution.

\section*{ Acknowledgments.} A. Pillai and M. Singh were supported in part by NSF grant CCF-2106444 and NSF grant CCF-1910423.
G. Ponte was supported in part by CNPq GM-GD scholarship 161501/2022-2 and AFOSR grant FA9550-22-1-0172.
M. Fampa was supported in part by CNPq grant 307167/2022-4.
J. Lee was supported in part by AFOSR grant FA9550-22-1-0172. W. Xie was supported in part by NSF grant 2246417 and  ONR grant N00014-24-1-2066.

\newpage

\bibliography{Reference}

\end{document}